\newtheorem{theorem}{Theorem}
\newtheorem{lemma}{Lemma}
\newtheorem{corollary}{Corollary}
\newtheorem{definition}{Definition}
\newcommand{\bra}[1]{\mbox{$\left\langle #1 \right|$}}
\newcommand{\ket}[1]{\mbox{$\left| #1 \right\rangle$}}
\newcommand{\braket}[2]{\mbox{$\left\langle #1 | #2 \right\rangle$}}
\newcommand{\comments}[1]{}
\begin{document}
\preprint{APS/123-QED}
\title{Polynomial measure of coherence}
\date{\today}
\author{You Zhou}
\author{Qi Zhao}
\author{Xiao Yuan}
\author{Xiongfeng Ma}
\affiliation{Center for Quantum Information, Institute for Interdisciplinary Information Sciences, Tsinghua University, Beijing 100084, China}

\begin{abstract}
Coherence, the superposition of orthogonal quantum states, is indispensable in various quantum processes. Inspired by the polynomial invariant for classifying and quantifying entanglement, we first define polynomial coherence measure and systematically investigate its properties. Except for the qubit case, we show that there is no polynomial coherence measure satisfying the criterion that its value takes zero if and only if for incoherent states. Then, we release this strict criterion and obtain a necessary condition for polynomial coherence measure. Furthermore, we give a typical example of polynomial coherence measure for pure states and extend it to mixed states via a convex-roof construction. Analytical formula of our convex-roof polynomial coherence measure is obtained for symmetric states which are invariant under arbitrary basis permutation. Consequently, for general mixed states, we give a lower bound of our coherence measure.
\end{abstract}


\maketitle

\section{Introduction}
Coherence describes a unique feature of quantum mechanics --- superposition of orthogonal states. The study of coherence can date back to the early development of quantum optics \cite{glauber1963quantum}, where interference phenomenon is demonstrated for the wave-particle duality of quantum mechanics. In quantum information, coherence acts as an indispensable ingredient in many tasks, such as quantum computing \cite{nielsen2010quantum}, metrology \cite{BRAUNSTEIN1996135}, and randomness generation \cite{Ma2016QRNG}. Furthermore, coherence also plays an important role in quantum thermodynamics \cite{aaberg2014catalytic,lostaglio2015description,lostaglio2015quantum}, and quantum phase transition \cite{ccakmak2015factorization,karpat2014quantum}.

With the development of the quantum information theory, a resource framework of coherence has been recently proposed \cite{Baumgratz2014Quantify}. The free state and the free operation are two elementary ingredients in a quantum resource theory. In the resource theory of coherence, the set of free states is a collection of all quantum states whose density matrices are diagonal in a reference computational basis ${I} = \{\ket{i}\}$. The free operations are incoherent complete positive and trace preserving (ICPTP) operations, which cannot map any incoherent state to a coherent state. With the definitions of free states and free operations, one can define a coherence measure that quantifies the superposition of reference basis. Based on this coherence framework, several measures are proposed, such as relative entropy of coherence, $l_1$ norm of coherence \cite{Baumgratz2014Quantify}, and coherence of formation \cite{yuan2015intrinsic,winter2016operational}. Moreover, coherence in distributed systems \cite{Chitambar2016Relating,streltsov2015hierarchies} and the connections between coherence and other quantum resources are also developed along this line \cite{ma2016converting,streltsov2015measuring,chitambar2016assisted}.

One important class of coherence measures is based on the convex-roof construction \cite{yuan2015intrinsic}. For any coherence measure of pure states $C(\ket{\psi})$, the convex roof extension of a general mixed state $\rho$ is defined as
\begin{equation}\label{convex roof}
\begin{aligned}
C(\rho)=\min\limits_{\{p_i,\ket{\psi_i}\}} \sum_i p_i C(\ket{\psi_i}),
\end{aligned}
\end{equation}
where the minimization is over all the decompositions $\{p_i,\ket{\psi_i}\}$ of  $\rho=\sum_i p_i \ket{\psi_i}\bra{\psi_i}$. When $C(\ket{\psi})=S(\Delta(\ket{\psi}\bra{\psi}))$, where $S$ is von Neumann entropy and $\Delta(\rho) = \sum_i \ket{i}\bra{i}\rho\ket{i}\bra{i}$ is the dephasing channel on the basis $I$, the corresponding measure is the coherence of formation. When $C(\ket{\psi})=\max_i |\braket{i}{\psi}|^2$, the corresponding measure is the geometric coherence \cite{streltsov2015measuring}. In general, the minimization problem in Eq.~\eqref{convex roof} is extremely hard. In particular, analytical formula of the coherence of formation is only obtained for qubit states.

This is very similar to quantifying another well-known quantum resource, entanglement, where free states are separable states and free operations are local operations and classical communication \cite{horodecki2009quantum}. In entanglement measures, convex-roof constructions have been widely studied \cite{bennett1996mixed,uhlmann1998entropy}. Similarly, the minimization problem is generally hard. Fortunately, there are two solvable cases, concurrence \cite{hill1997entanglement,wootters1998entanglement} and three-tangle \cite{coffman2000distributed}. Both of them are related to a very useful class of functions, referred as \emph{polynomial invariant} \cite{eltschka2014quantifying}. A polynomial invariant is a homogenous polynomial function of the coefficients of a pure state, $P_h(\ket{\psi})$, which is invariant under stochastic local operations and classical communication (SLOCC) \cite{dur2000three}. Denote $h$ to be the degree of the polynomial function, for an $N$-qudit state $\ket{\psi}$,
\begin{equation} \label{}
\begin{aligned}
P_h(\kappa L\ket{\psi})=\kappa^hP_h(\ket{\psi}),
\end{aligned}
\end{equation}
where $\kappa$ is an arbitrary scalar and $L\in \mathcal{SL}(d,\mathbb{C})^{\otimes{N}}$ is a product of invertible linear operators representing SLOCC. For an entanglement measure of pure states, one can add a positive power $m$ to the absolute value of the polynomial invariant,
\begin{equation}\label{}
\begin{aligned}
E_h^m(\ket{\psi})=|P_h(\ket{\psi})|^m,
\end{aligned}
\end{equation}
where the overall degree is $hm$. Polynomial invariants are used to classify and quantify various types of entanglement in multi-qubit \cite{osterloh2005constructing,d2008polynomial} and qudit systems \cite{gour2013classification}. Specifically, the convex-roof of concurrence can be solved analytically in the two-qubit case \cite{wootters1998entanglement}, and the three-tangle for three-qubit is analytically solvable for some special mixed states \cite{lohmayer2006entangled,jung2009three,siewert2012quantifying}. Recently, a geometric approach \cite{regula2016entanglement} is proposed to analyse the convex-roof extension of polynomial measures for the states of more qubits in some specific cases.

Inspired by the polynomial invariant in entanglement measure, we investigate polynomial measure of coherence in this work. First, in Sec.~\ref{sec2}, after briefly reviewing the framework of coherence measure, we define the polynomial coherence measure. Then, in Sec.~\ref{sec3}, we show a no-go theorem for polynomial coherence measures. That is, if the coherence measure just vanishes on incoherent states, there is no such polynomial coherence measure when system dimension is larger than $2$.  Moreover, in Sec.~\ref{sec4}, we permit some superposition states to take zero-coherence, and we find a necessary condition for polynomial coherence measures. In Sec.~\ref{secexample}, we construct a polynomial coherence measure for pure states, which shows similar form with the G-concurrence in entanglement measure. In addition, we derive an analytical result for symmetric states and give a lower bound for general states.  Finally, we conclude in Sec.~\ref{sec6}.

\section{Polynomial coherence measure}\label{sec2}
Let us start with a brief review on the framework of coherence measure \cite{Baumgratz2014Quantify}. In a $d$-dimensional Hilbert space $\mathcal{H}_d$, the coherence measure is defined in a reference basis ${I} =\{\ket{i}\}_{i=1,2,...,d}$. Thus, the incoherent states are the states whose density matrices are diagonal,
\begin{equation}\label{}
\begin{aligned}
\delta=\sum_{i=1}^dp_i\ket{i}\bra{i}.
\end{aligned}
\end{equation}
Denote the set of the incoherent states to be $\mathcal{I}$. The incoherent operation can be expressed as an ICPTP map $\Phi_{ICPTP}(\rho)=\sum_n K_n\rho K_n^\dag$, in which each Kraus operator satisfies the condition $K_n\rho K_n^\dag/Tr(K_n\rho K_n^\dag) \in \mathcal{I}$ if $\rho\in \mathcal{I} $. That is to say, no coherence can be generated from any incoherent states via incoherent operations. Here, the probability to obtain the $n$th output is denoted by $p_n=\mathrm{Tr}(K_n\rho K_n^\dag)$.

Generally speaking, a coherence measure $C(\rho)$ maps a quantum state $\rho$ to a non-negative number. There are three criteria for $C(\rho)$, as listed in Table.~\ref{table_coherence} \cite{Baumgratz2014Quantify}. Note that the criterion $(C1')$ is a stronger version than $(C1)$. Sometimes, a weaker version of $(C2)$ is used, where the monotonicity holds only for the average state, $C(\rho)\geq C(\Phi_{\rm ICPTP}(\rho))$. In this work, we focus on the criterion $(C2)$, since it is more reasonable from the physics point of view.

\begin{table*}[tbph]
\caption{Criteria for a coherence measure}\label{table_coherence}
\centering
\begin{tabular}{ll}
\hline
\hline
$(C1)$ & $C(\delta)=0$ if $\delta\in\rm \mathcal{I}$; $(C1')$ $C(\delta)=0$ iff $\delta\in\rm \mathcal{I}$ \\
$(C2)$ & Monotonicity with post-selection: for any incoherent operation $\Phi_{\rm ICPTP}(\rho)=\sum_n K_n\rho K_n^\dag$,  \\
& $C(\rho)\geq p_n C(\rho_n)$, where $\rho_n=K_n\rho K_n^\dag/p_n$ and $p_n=\mathrm{Tr}(K_n\rho K_n^\dag)$ \\
$(C3)$ & Convexity: $\sum_e p_e C(\rho_e)\geq C(\sum_e p_e \rho_e)$ \\
\hline
\hline
\end{tabular}
\end{table*}

Next, we give the definition of the polynomial coherence measure, drawing on the experience of polynomial invariant for entanglement measure. Denote a homogenous polynomial function of degree-$h$, constructed by the coefficients of a pure state $\ket{\psi}=\sum_{i=1}^d a_i\ket{i}$ in the computational basis, as
\begin{equation}\label{poly:homo}
\begin{aligned}
P_h(\ket{\psi})=\sum_{k_1,k_2,...,k_d}\chi_{k_1k_2\cdots k_d}\prod_{i=1}^{d} a^{k_i}_i,
\end{aligned}
\end{equation}
where $k_i$ are the nonnegative integer power of $a_i$, $\sum{k_i}=h$, and $\chi_{k_1k_2\cdots k_d}$ are coefficients. Then after imposing a proper power $m>0$ on the absolute value of a homogenous polynomial, one can construct a coherence measure as,
\begin{equation}\label{polypure}
\begin{aligned}
C_p(\ket{\psi})=|P_h(\ket{\psi})|^m,
\end{aligned}
\end{equation}
where the overall degree is $hm$, and the subscript $p$ is the abbreviation for polynomial.

A polynomial coherence measure for pure states $C_p(\ket{\psi})$ can be extended to mixed states by utilizing the convex-roof construction,
\begin{equation}\label{convex_roof}
\begin{aligned}
C_p(\rho)=\min\limits_{\{p_i,\ket{\psi_i}\}} \sum_i p_i C_p(\ket{\psi_i}),
\end{aligned}
\end{equation}
where the minimization runs over all the pure state decompositions of $\rho=\sum_i p_i \ket{\psi_i}\bra{\psi_i}$ with $\sum_i p_i=1$ and $p_i\geq 0$, and $C_p(\ket{\psi})$ is the pure-state polynomial coherence measure as shown in Eq.~\eqref{polypure}. Note that if the pure-state measure Eq.~\eqref{polypure} satisfies the coherence measure criteria listed in Table \ref{table_coherence}, the mixed-state measure via the convex-roof construction Eq.~\eqref{convex_roof} would also satisfy these criteria \cite{yuan2015intrinsic}.

%
%
%

\section{No-go theorem}\label{sec3}
The simplest example of the polynomial coherence measure is the $l_1$-norm for $d=2$ on pure state. For a pure qubit state, $\ket{\psi}=\alpha\ket{0}+\beta\ket{1}$, the $l_1$-norm coherence measure takes the sum of the absolute value of the off-diagonal terms in the density matrix,
\begin{equation} \label{poly:l1norm}
\begin{aligned}
C_{l_1}(\ket{\psi})=|\alpha\beta^*|+|\alpha^*\beta|=2|\alpha\beta|.
\end{aligned}
\end{equation}
By the definition of Eq.~\eqref{polypure}, $C_{l_1}$ is the absolute value of a degree-$2$ homogenous polynomial function with a power $m=1$. Meanwhile, this coherence measure satisfies the criteria $(C1')$, $(C2)$, and $(C3)$ \cite{Baumgratz2014Quantify}. Then its convex-roof construction via Eq.~\eqref{poly:l1norm} turns out to be a polynomial coherence measure satisfying these criteria. Note that when the function Eq.~\eqref{poly:l1norm} is extended to $d>2$, it cannot be expressed as the absolute value of a homogenous polynomial function. Thus, when $d>2$, the $l_1$-norm coherence measure is not a polynomial coherence measure.

Surprisingly, for $d>2$, there is no polynomial coherence measure that satisfies the criterion $(C1')$. In order to show this no-go theorem, we first prove the following Lemma.

\begin{lemma}\label{lemma_root}
For any polynomial coherence measure $C_p(\ket{\psi})$ and two orthogonal pure states $\ket{\psi_1}$ and $\ket{\psi_2}$, there exists two complex numbers $\alpha$ and $\beta$ such that
\begin{equation}
\begin{aligned}
C_p(\alpha \ket{\psi_1} + \beta \ket{\psi_2})=0
\end{aligned}
\end{equation}
where $|\alpha |^2+|\beta |^2=1$. That is, there exists at least one zero-coherence state in the superposition of $\ket{\psi_1}$ and $\ket{\psi_2}$.
\end{lemma}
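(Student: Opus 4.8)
The plan is to reduce the statement to the existence of a root of a single homogeneous polynomial in two complex variables. First I would expand the two states in the reference basis, writing $\ket{\psi_1}=\sum_{i=1}^d b_i\ket{i}$ and $\ket{\psi_2}=\sum_{i=1}^d c_i\ket{i}$, so that the superposition $\ket{\psi(\alpha,\beta)}:=\alpha\ket{\psi_1}+\beta\ket{\psi_2}$ has computational-basis coefficients $a_i=\alpha b_i+\beta c_i$. Here the orthogonality and normalization of $\ket{\psi_1},\ket{\psi_2}$ serve only to guarantee that $\ket{\psi(\alpha,\beta)}$ is a normalized physical state exactly when $|\alpha|^2+|\beta|^2=1$; it plays no role in the vanishing argument itself.

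Next I would substitute $a_i=\alpha b_i+\beta c_i$ into the degree-$h$ homogeneous polynomial $P_h$ of Eq.~\eqref{poly:homo}. Since each $a_i$ is linear in the pair $(\alpha,\beta)$ and $P_h$ is homogeneous of total degree $h$ in the $a_i$, the composite
\[
Q(\alpha,\beta):=P_h\!\bigl(\alpha\ket{\psi_1}+\beta\ket{\psi_2}\bigr)=\sum_{k=0}^{h}q_k\,\alpha^k\beta^{h-k}
\]
is itself a homogeneous polynomial of degree $h$ in the two complex variables $\alpha,\beta$, for some coefficients $q_k$ determined by $\chi_{k_1\cdots k_d}$, $b_i$, and $c_i$.

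The core step is then to exhibit a nontrivial zero of $Q$. I would invoke the fundamental theorem of algebra in its homogeneous form: over $\mathbb{C}$ every nonzero homogeneous polynomial of degree $h\ge 1$ in two variables factors completely into linear forms, $Q(\alpha,\beta)=q\prod_j(\nu_j\alpha-\mu_j\beta)$, and each factor vanishes at the nonzero point $(\alpha,\beta)=(\mu_j,\nu_j)$. Concretely, when $q_0\neq 0$ one can dehomogenize by setting $t=\alpha/\beta$ and apply the fundamental theorem of algebra to $q(t)=\sum_k q_k t^k$ to obtain a root $t_0$, giving $(\alpha,\beta)\propto(t_0,1)$; the boundary cases are handled directly by $(\alpha,\beta)=(0,1)$ when $q_0=0$ and by $(\alpha,\beta)=(1,0)$ when $Q=q_0\beta^h$. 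Finally, because $P_h$ and hence $C_p=|P_h|^m$ are homogeneous, I would rescale any nontrivial root $(\alpha_0,\beta_0)$ to satisfy $|\alpha|^2+|\beta|^2=1$, whereupon $C_p(\alpha\ket{\psi_1}+\beta\ket{\psi_2})=|Q(\alpha,\beta)|^m=0$, as claimed.

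I do not expect a serious obstacle here; the only point needing care is the degenerate case in which $Q$ fails to have degree $h\ge 1$, which can happen only when $h=0$. In that case $P_h$ is a constant, so $C_p$ is either identically zero (making the statement trivial) or a nonzero constant that violates criterion $(C1)$ and is therefore not a coherence measure. Hence for any genuine polynomial coherence measure the homogeneous root-existence argument applies and the lemma follows.
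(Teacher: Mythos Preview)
Your argument is correct and follows essentially the same route as the paper's proof: both reduce the problem to finding a root of the one-variable polynomial obtained by restricting $P_h$ to the two-dimensional subspace spanned by $\ket{\psi_1}$ and $\ket{\psi_2}$, and then invoke the fundamental theorem of algebra. The only cosmetic difference is that the paper uses the affine parameter $\omega$ in $\ket{\psi}\propto\ket{\psi_1}+\omega\ket{\psi_2}$ (treating the case $C_p(\ket{\psi_2})=0$ separately as the point at infinity), whereas you work directly in homogeneous coordinates $(\alpha,\beta)$; your explicit handling of the degenerate case $h=0$ is a tidy addition that the paper leaves implicit.
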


\begin{proof}
Since $m>0$, the roots of $C_p(\ket{\psi})=0$ in Eq.~\eqref{polypure} are the same with the ones of $|P_h(\ket{\psi})|=0$ in Eq.~\eqref{poly:homo}. That is, we only need to prove Lemma for the case of $m=1$. Since $P_h(\ket{\psi})$ is a homogenous polynomial function of the coefficients of $\ket{\psi}$, one can ignore its global phase. Thus, any pure state in the superposition of $\ket{\psi_1}$ and $\ket{\psi_2}$ can be represented by
\begin{equation}
\begin{aligned}
\ket{\psi}=\frac{\ket{\psi_1}+\omega\ket{\psi_2}}{\sqrt{1+|\omega|^2}},
\end{aligned}
\end{equation}
where the global phase is ignored, $\omega$ is a complex number containing the relative phase, and $\ket{\psi}\rightarrow \ket{\psi_2}$, as $|\omega|\rightarrow\infty$.

First, if $C_p(\ket{\psi_2})=0$, the Lemma holds automatically. When $C_p(\ket{\psi_2})>0$, $C_p(\ket{\psi})$ can be written as,
\begin{equation}\label{eq_lemma_root}
\begin{aligned}
C_p(\ket{\psi})&=\left|P_h\left(\frac{\ket{\psi_1}+\omega\ket{\psi_2}}{\sqrt{1+|\omega|^2}}\right)\right|,\\
&=(1+|\omega|^2)^{-h/2}|P_h(\ket{\psi_1}+\omega\ket{\psi_2})|, \\
\end{aligned}
\end{equation}
since $P_h$ a homogenous polynomial function of degree $h$. Note that the condition $C_p(\ket{\psi_2})>0$, i.e.,
\begin{equation}\label{}
\begin{aligned}
\lim_{\omega\rightarrow\infty}(1+|\omega|^2)^{-h/2}|P_h(\ket{\psi_1}+w \ket{\psi_2})|>0,
\end{aligned}
\end{equation}
guarantees that the coefficient of $\omega^h$ in $P_h(\ket{\psi_1}+\omega \ket{\psi_2})=0$ is nonzero. Then, there are $h$ roots of the homogenous polynomial function of $\omega$,
\begin{equation}\label{}
\begin{aligned}
P_h(\ket{\psi_1}+\omega\ket{\psi_2})=0,
\end{aligned}
\end{equation}
denoted by $\{z_1,z_2\ldots, z_h\}$. Thus, $C_p(\ket{\psi})$ can be expressed as
\begin{equation}\label{poly:Cproots}
\begin{aligned}
C_p(\ket{\psi})&= A(1+|\omega|^2)^{-h/2}\prod_{i=1}^h|\omega-z_i|,
\end{aligned}
\end{equation}
where $A>0$ is some constant. In summary, we find at least one $\omega$, $\alpha=(1+|\omega|^2)^{-1/2}$ and $\beta=\omega(1+|\omega|^2)^{-1/2}$, such that $C_p(\ket{\psi})=0$

\end{proof}

\begin{theorem}\label{th_coherence}
There is no polynomial coherence measure in $\mathcal{H}_d$ with $d\geq3$ that satisfies the criterion $(C1')$.
\end{theorem}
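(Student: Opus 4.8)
The plan is to argue by contradiction: assume $C_p$ is a polynomial coherence measure on $\mathcal{H}_d$ with $d\ge 3$ satisfying $(C1')$, and produce a \emph{coherent} pure state on which $C_p$ vanishes. First I would record what $(C1')$ says at the level of pure states. An incoherent state is diagonal in the reference basis, so a rank-one incoherent density matrix must be some $\ket{j}\bra{j}$; hence the only pure incoherent states are the basis states (up to an irrelevant global phase). Thus $(C1')$ forces $C_p(\ket{\psi})=0$ if and only if $\ket{\psi}\propto\ket{j}$ for some $j$.

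Next I would exhibit a two-dimensional subspace $V\subset\mathcal{H}_d$ that contains no basis vector. This is precisely where $d\ge 3$ enters: for $d=2$ the only two-dimensional subspace is the whole space, which contains both $\ket{1}$ and $\ket{2}$, whereas for $d\ge 3$ one can take, e.g., $V=\mathrm{span}\{\ket{1}+\ket{2},\,\ket{2}+\ket{3}\}$. Writing a general element as $a\ket{1}+(a+b)\ket{2}+b\ket{3}$ shows that it is proportional to a basis vector only when $a=b=0$, so $V$ is basis-free. (More abstractly, the $2$-planes through a fixed $\ket{j}$ form a subvariety of the Grassmannian $\mathrm{Gr}(2,d)$ of complex codimension $d-2>0$, so a generic plane avoids all $d$ basis vectors; the argument degenerates exactly at $d=2$.)

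Then I would pick, by Gram--Schmidt, an orthonormal pair $\ket{\psi_1},\ket{\psi_2}$ spanning $V$ and apply Lemma~\ref{lemma_root} to them. The Lemma yields $\alpha,\beta$ with $|\alpha|^2+|\beta|^2=1$ and $C_p(\alpha\ket{\psi_1}+\beta\ket{\psi_2})=0$. The resulting unit vector $\ket{\phi}=\alpha\ket{\psi_1}+\beta\ket{\psi_2}$ lies in $V$, and since $V$ contains no basis vector, $\ket{\phi}$ is coherent. This contradicts the pure-state form of $(C1')$ established above, which completes the proof.

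The only genuine content is the existence of the basis-free plane $V$; everything else is a direct invocation of Lemma~\ref{lemma_root}. I expect the main (and essentially the only) obstacle to be verifying that $V$ avoids every basis vector while still admitting an orthonormal spanning pair, so that the Lemma applies verbatim; the explicit choice above handles both at once. It is worth emphasizing that this construction also explains the dimension restriction, since it collapses exactly at $d=2$, consistent with the $l_1$-norm providing a bona fide polynomial coherence measure there.
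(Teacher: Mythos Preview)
Your argument is correct. The key observation---that for $d\ge 3$ one can find a two-dimensional subspace $V$ containing no reference-basis vector, and then invoke Lemma~\ref{lemma_root} on an orthonormal basis of $V$---is sound, and your explicit verification that $a\ket{1}+(a+b)\ket{2}+b\ket{3}$ is never proportional to any $\ket{j}$ (for $j\le 3$ by the linear system, for $j\ge 4$ trivially) is complete.

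The paper's proof differs in structure. For $d\ge 4$ it splits $\mathcal{H}_d=\mathcal{H}_{d_1}\oplus\mathcal{H}_{d_2}$ along coordinate axes with $d_1,d_2\ge 2$, chooses coherent $\ket{\psi_1}\in\mathcal{H}_{d_1}$, $\ket{\psi_2}\in\mathcal{H}_{d_2}$ (automatically orthogonal), and applies Lemma~\ref{lemma_root}; this decomposition is unavailable when $d=3$, so the paper treats $d=3$ separately in an appendix, choosing the \emph{non-orthogonal} pair $\tfrac{1}{\sqrt{2}}(\ket{1}+\ket{2})$, $\tfrac{1}{\sqrt{2}}(\ket{2}+\ket{3})$ and proving an extension of Lemma~\ref{lemma_root} to states with $|\langle\psi_1|\psi_2\rangle|<1$. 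Your route is more economical: by Gram--Schmidt inside the same span you recover an orthonormal pair and can use Lemma~\ref{lemma_root} verbatim, handling all $d\ge 3$ in one stroke without the auxiliary lemma. The paper's split, on the other hand, makes the role of the coordinate decomposition more explicit and foreshadows the subspace arguments used later in Theorem~\ref{th_coherence_d}.
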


\begin{proof}
In the following proof, we focus on the case of $d\geq4$ and leave $d=3$ in Appendix~\ref{d=3}. With $d\ge4$, we can decompose $\mathcal{H}_d$ into two orthogonal subspaces $\mathcal{H}_{d_1}\oplus \mathcal{H}_{d_2}$ in the computational basis, i.e. $\mathcal{H}_1=\{\ket{i}_{i=1,\cdots,d_1}\}$ and $\mathcal{H}_2=\{\ket{i}_{i=d_1+1,\cdots,d}\}$ with the corresponding dimensions $d_1$ and $d_2=d-d_1$ both larger than $2$.

Suppose there exist a polynomial coherence measure $C_p(\ket{\psi})$ such that the criterion $(C1')$ listed in Table \ref{table_coherence} can be satisfied. Then, there are exactly $d$ zero-coherence pure states $\ket{i}$ $(i=1,\cdots,d)$, which form the reference basis. One can pick up two coherent states, $\ket{\psi_1}\in\mathcal{H}_{d_1}$ and $\ket{\psi_2}\in\mathcal{H}_{d_2}$. That is, $C_p(\ket{\psi_1})>0$ and $C_p(\ket{\psi_2})>0$. Since two subspaces $\mathcal{H}_{d_1}$ and $\mathcal{H}_{d_2}$ are orthogonal, any superposition of these two states, $\alpha \ket{\psi_1} + \beta \ket{\psi_2}$ with $|\alpha|^2+|\beta|^2=1$, should not equal to any of the reference basis states, i.e., $\alpha \ket{\psi_1} + \beta \ket{\psi_2}\neq \ket{i}, \forall i=1,...,d$. Thus, due to the criterion $(C1')$, we have
\begin{equation}\label{contrac}
\begin{aligned}
C_p(\alpha \ket{\psi_1} + \beta \ket{\psi_2})>0.
\end{aligned}
\end{equation}
On the other hand, for the polynomial coherence measure $C_p(\ket{\psi})$, Lemma \ref{lemma_root} states that provided any two orthogonal pure states $\ket{\psi_1}$, $\ket{\psi_2}$, there exists at least a pair of complex numbers, $\alpha$ and $\beta$, such that $\alpha \ket{\psi_1} + \beta \ket{\psi_2}$ is a zero-coherence state, i.e.,
\begin{equation}\label{contrac}
\begin{aligned}
C_p(\alpha \ket{\psi_1} + \beta \ket{\psi_2})=0.
\end{aligned}
\end{equation}
Therefore, it leads to a contradiction.

\end{proof}

\section{Necessary condition for polynomial coherence measure}\label{sec4}
From Theorem \ref{th_coherence}, we have shown a no-go result of the polynomial coherence measure for $d\geq 3$ when the criterion $(C1')$ in Table~\ref{table_coherence} is considered. In the following discussions, we study the polynomial coherence measure with the criteria $(C1)$, $(C2)$, and $(C3)$. Then, there will be some coherent states whose coherence measure is zero. This situation also happens in entanglement measures, such as negativity, which remains zero for the bound entangled states \citep{HORODECKI1997333}. Here, we focus on the pure-state case and employ the convex-roof construction for general mixed states. As presented in the following theorem, we find a very restrictive necessary condition for polynomial coherence measures that $C_p(\ket{\psi})=0$, for all \ket{\psi} whose support does not span all the reference basis $\{i\}$.

\begin{theorem}\label{th_coherence_d}
For any $\ket{\psi}\in \mathcal{H}_d$, the value of a polynomial coherence measure $C_p(\ket{\psi})$ should vanish if the rank of the corresponding dephased state $\Delta(\ket{\psi}\bra{\psi})$ is less than $d$, i.e., $rank(\Delta(\ket{\psi}\bra{\psi}))<d$.
\end{theorem}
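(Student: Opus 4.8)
The plan is to pin down the polynomial $P_h$ almost completely by exploiting the invariance of $C_p$ under incoherent unitaries, and then read off the vanishing condition directly. The first thing to note is that for $\ket{\psi}=\sum_i a_i\ket{i}$ the dephased state is $\Delta(\ket{\psi}\bra{\psi})=\sum_i|a_i|^2\ket{i}\bra{i}$, so $rank(\Delta(\ket{\psi}\bra{\psi}))$ equals the number of nonvanishing coefficients $a_i$. The hypothesis $rank(\Delta(\ket{\psi}\bra{\psi}))<d$ is therefore equivalent to $a_i=0$ for at least one $i$, and it suffices to prove that $C_p$ vanishes whenever some coefficient is zero.

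First I would show that any polynomial coherence measure is invariant under every incoherent unitary. An incoherent unitary is necessarily of the form $U=\sum_i e^{i\theta_i}\ket{\pi(i)}\bra{i}$ with $\pi$ a permutation, and both $U$ and $U^{\dagger}$ are incoherent operations consisting of a single Kraus operator. Applying criterion $(C2)$ to $U$ gives $C_p(\ket{\psi})\geq C_p(U\ket{\psi})$, and applying it to $U^{\dagger}$ gives the reverse inequality, so $C_p(U\ket{\psi})=C_p(\ket{\psi})$. Translated to the coefficients, $|P_h|$ is invariant under the independent phase rotations $a_i\mapsto e^{i\theta_i}a_i$ and under all permutations of the labels $a_i\mapsto a_{\pi(i)}$.

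Next I would extract the form of $P_h$. Grouping $P_h=\sum_w c_w a^w$ by multi-index $w=(k_1,\dots,k_d)$ with $|w|=h$, phase invariance means that $|P_h(e^{i\theta}a)|^2=\sum_{w,w'}c_w\bar c_{w'}\,a^w\overline{a^{w'}}\,e^{i(w-w')\cdot\theta}$ is independent of $\theta$ for every $a$. Collecting this trigonometric polynomial frequency by frequency and using that the mixed monomials $a^w\overline{a^{w'}}$ are linearly independent functions on $\mathbb{C}^d$, every Fourier mode at a nonzero frequency must vanish identically, forcing $c_w\bar c_{w'}=0$ for all $w\neq w'$. Hence at most one multi-index contributes and $P_h=c\prod_i a_i^{k_i}$ is a single monomial (if $c=0$ then $C_p\equiv0$ and the claim is trivial). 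Permutation invariance of $|c|\prod_i|a_i|^{k_i}$ then forces all exponents to be equal, so $k_i=h/d$ for every $i$ and $P_h=c\prod_i a_i^{h/d}$; in particular a nontrivial measure can exist only when $d$ divides $h$. Consequently $C_p(\ket{\psi})=|c|^m\prod_i|a_i|^{mh/d}$, a geometric-mean-type quantity that vanishes precisely when some $a_i=0$, which is exactly the stated condition $rank(\Delta(\ket{\psi}\bra{\psi}))<d$.

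I expect the main obstacle to be the phase-invariance step, i.e. rigorously arguing that $|P_h(e^{i\theta}a)|^2$ being constant in $\theta$ for all $a$ truly forces $P_h$ to be a monomial. The delicate point is that two distinct multi-index pairs $(w,w')$ can share the same frequency $w-w'$, so one must observe that they nonetheless yield distinct, linearly independent monomials $a^w\overline{a^{w'}}$ before concluding that each coefficient $c_w\bar c_{w'}$ vanishes separately. The remaining ingredients --- the characterization of incoherent unitaries and the use of $(C2)$ for both $U$ and $U^{\dagger}$ --- are routine. I note that this route in fact classifies all polynomial coherence measures and so proves somewhat more than the theorem demands.
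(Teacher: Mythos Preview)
Your argument is correct and in fact proves more than the theorem asks: by combining phase and permutation invariance under incoherent unitaries you classify \emph{all} polynomial coherence measures on pure states, showing that the only nontrivial possibility is (a power of) the G-coherence $\prod_i |a_i|^{h/d}$, from which the vanishing on rank-deficient states is immediate. The one place to be careful is the linear-independence step; since $P_h$ is homogeneous, the phase identity $|P_h(e^{i\theta}a)|=|P_h(a)|$ extends from the unit sphere to all of $\mathbb{C}^d$, and there the mixed monomials $a^w\overline{a^{w'}}$ are genuinely linearly independent (e.g.\ write $a_j=r_je^{i\phi_j}$ and observe $(w,w')\mapsto(w+w',w-w')$ is injective), so your conclusion $c_w\bar c_{w'}=0$ for $w\neq w'$ goes through.

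The paper's proof is quite different. It never identifies $P_h$; instead it argues by contradiction, splitting into the cases $d_1\le d/2$ and $d_1>d/2$ where $d_1$ is the rank of the dephased state. In the first case it uses a basis relabeling to produce an orthogonal copy $\ket{\psi_2}$ of $\ket{\psi_1}$ with the same coherence, then applies a projective incoherent operation and $(C2)$ to force $C_p(\alpha\ket{\psi_1}+\beta\ket{\psi_2})\ge C_p(\ket{\psi_1})>0$ for all $\alpha,\beta$, contradicting Lemma~\ref{lemma_root}. The second case reduces to the first via a degree-counting argument on the polynomial in the superposition parameter $\omega$. Your route is shorter, more algebraic, and yields the uniqueness of the G-coherence as a bonus; the paper's route is more operational, relying only on the monotonicity criterion and the root-existence Lemma~\ref{lemma_root}, without any structural analysis of $P_h$.
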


\begin{proof}
Suppose there exists $\ket{\psi_1}\in \mathcal{H}_d$ such that $C_p(\ket{\psi_1})>0$ and  $rank(\Delta(\ket{\psi_1}\bra{\psi_1}))=d_1<d$. Without loss of generality, $\ket{\psi_1}$ is assumed to be in the subspace $\mathcal{H}_{d_1}=spanned\{\ket{1},\ket{2},...,\ket{d_1}\}$. Define the complementary subspace to be $\mathcal{H}_{d_2}=spanned\{\ket{d_1+1},\ket{d_1+2},...,\ket{d}\}$, where $d_2=d-d_1>0$.

\textbf{Step 1, we show that if $d_1\leq d/2$, then $C_p(\ket{\psi_1})>0$ leads to a contradiction to Lemma \ref{lemma_root}.} Now that $d_1\leq d/2 \leq d_2$, there exists a relabeling unitary $U_t$ that transforms the bases in $\mathcal{H}_{d_1}$ to parts of the bases in $\mathcal{H}_{d_2}$. For instance, $\mathcal{H}_{d_1}=spanned\{\ket{1}, \ket{2}\}$ and $\mathcal{H}_{d_2}=spanned\{\ket{3}, \ket{4}, \ket{5}\}$, $U_t$ can be chosen as $\ket{1}\bra{3}+\ket{3}\bra{1}+\ket{2}\bra{4}+\ket{4}\bra{2}$. In fact, $U_t$ and $U_t^\dag$ are both incoherent operation, since they just exchange the index of the reference basis. Assume that $U_t$ maps $\ket{\psi_1}$ to a new state $\ket{\psi_2}=U_t\ket{\psi_1} \in \mathcal{H}_{d_2}$, then we have $\braket{\psi_1}{\psi_2}=0$. Due to the criterion $(C2)$, it is not hard to show that an incoherent unitary transformation does not change the coherence,
\begin{equation}\label{equal_coherence}
\begin{aligned}
C_p(\ket{\psi_1})= C_p(\ket{\psi_2}).
\end{aligned}
\end{equation}

Define another incoherent operation, composed by two operators $P_1=\sum_{i=1}^{d_1}\ket{i}\bra{i}$ and $P_2=\sum_{i=d_1+1}^d\ket{i}\bra{i}$ that project states to $\mathcal{H}_{d_1}$ and $\mathcal{H}_{d_2}$, respectively,
\begin{equation}\label{ICPTP}
\begin{aligned}
\Phi_{ICPTP}(\rho)&=\sum_{i=1,2}P_i\rho P_i^\dag, \\
\end{aligned}
\end{equation}
which represents a dephasing operation between the two subspaces. Then, for any superposition state, $\alpha \ket{\psi_1} + \beta \ket{\psi_2}$ with $|\alpha|^2+|\beta|^2=1$, its coherence measure should not increase under the ICPTP operation, as  required by $(C2)$ in Table.~\ref{table_coherence},
\begin{equation}\label{}
\begin{aligned}
C_p(\alpha \ket{\psi_1} + \beta \ket{\psi_2}) &\ge C_p(\Phi_{ICPTP}(\alpha \ket{\psi_1} + \beta \ket{\psi_2})) \\
&= |\alpha|^2 C_p(\ket{\psi_1})+|\beta|^2 C_p(\ket{\psi_2})\\
&=C_p(\ket{\psi_1})>0.
\end{aligned}
\end{equation}
where the last equality comes from Eq.~\eqref{equal_coherence}. Therefore, $C_p(\alpha \ket{\psi_1} + \beta \ket{\psi_2})>0$ for any $\alpha$ and $\beta$. This leads to a contradiction to Lemma \ref{lemma_root}.

\textbf{Step 2, we show that if $d/2<d_1<d$, then $C_p(\ket{\psi_1})>0$ also leads to a contradiction.} Now that $0<d_2< d/2 < d_1<d$, for any $\ket{\psi_2}\in\mathcal{H}_{d_2}$, we have $C_p(\ket{\psi_2})=0$ due to the above proof in Step 1.

Similar to the proof of Lemma \ref{lemma_root}, we only need to consider the case of $m=1$ and we can get the coherence measure for the superposition state of $\ket{\psi_1}\in\mathcal{H}_{d_1}$ and $\ket{\psi_2}\in\mathcal{H}_{d_2}$ as $(1+|\omega|^2)^{-h/2}|P_h(\ket{\psi_1}+\omega\ket{\psi_2})|$. Since
\begin{equation}\label{}
\begin{aligned}
C_p(\ket{\psi_2})=\lim_{\omega\rightarrow\infty}(1+|\omega|^2)^{-h/2}|P_h(\ket{\psi_1}+w \ket{\psi_2})|=0,
\end{aligned}
\end{equation}
the largest degree of $\omega$ in the polynomial $P_h(\ket{\psi_1}+\omega\ket{\psi_2})$, denoted by $\mu$, is smaller than the degree $h$.

When $\mu=0$, i.e., the polynomial is a constant, we denote its absolute value by $k$. Then the coherence measure becomes,
\begin{equation}\label{k_decrease}
\begin{aligned}
C_p(\ket{\psi})=k(1+|\omega|^2)^{-h/2}.
\end{aligned}
\end{equation}
We show that the constant $k=0$ in Appendix \ref{k=0}. As a result, $C_p(\ket{\psi_1})=0$. This leads to a contradiction to our assumption that $C_p(\ket{\psi_1})>0$.

When $0<\mu<d$, i.e., $P_h(\ket{\psi_1}+\omega\ket{\psi_2})$ is a non-constant polynomial of $\omega$, there exists at least one root $|z|<\infty$, such that $P_h(\ket{\psi_1}+z\ket{\psi_2})=0$. Then, we can find that the coherence measure of the state $\ket{\psi_r}=(\ket{\psi_1}+z\ket{\psi_2})/\sqrt{1+|z|^2}$ is $C_p(\ket{\psi_r})=0$. Next, we apply the ICPTP operation described in Eq.~\eqref{ICPTP} on $\ket{\psi_r}$ and obtain,
\begin{equation}\label{neq_coherence}
\begin{aligned}
C_p(\ket{\psi_r})&\geq \frac{1}{1+|z|^2}C_p(\ket{\psi_1})+\frac{|z|^2}{1+|z|^2} C_p(\ket{\psi_2})\\
&=\frac{1}{1+|z|^2}C_p(\ket{\psi_1}),
\end{aligned}
\end{equation}
where we use $C_p(\ket{\psi_2})=0$ in the equality. Combing the fact that $C_p(\ket{\psi_r})=0$, we can reach the conclusion that $C_p(\ket{\psi_1})=0$. This leads to a contradiction to our assumption that $C_p(\ket{\psi_1})>0$.

\end{proof}

\section{G-coherence measure} \label{secexample}
From Theorem \ref{th_coherence_d}, we can see that only the states with a full support on the computational basis could have positive values of a polynomial coherence measure. Here, we give an example of polynomial coherence measure satisfying this condition, which takes the geometric mean of the coefficients, for $\ket{\psi}=\sum_{i=1}^d a_i\ket{i}$,
\begin{equation}\label{G_measure}
\begin{aligned}
C_G(\ket{\psi})=d|a_1a_2...a_d|^{2/d}.
\end{aligned}
\end{equation}
Note that it is a degree-$d$ homogenous polynomial function modulated by a power $m=2/d$. This definition is an analogue to the G-concurrence in entanglement measure, which is related to the geometric mean of the Schmidt coefficients of a bipartite pure state \cite{gour2005family}. Hence we call the coherence measure defined in Eq.~\eqref{G_measure} \emph{G-coherence measure}. Since the geometric mean function is a concave function \cite{boyd2004convex}, following Theorem 1 in Ref.~\cite{Du2015CoherenceM}, we can quickly show that the G-coherence measure satisfies the criteria $(C1)$, $(C2)$ and $(C3)$.

When $d=2$, the G-coherence measure becomes the $l_1$-norm measure on pure state. When $d>2$, according to Theorem \ref{th_coherence_d}, there is a significant amount of coherent states whose $C_G$ is zero. For instance, in the case of $d=3$, the state $\frac{1}{\sqrt{2}}(\ket{0}+\ket{1})$ has zero G-coherence and this state cannot be transformed to a coherent state $\ket{\psi}$, where $rank(\Delta(\ket{\psi}\bra{\psi}))=3$, via a probabilistic incoherent operation \cite{winter2016operational}.

Now we move onto the mixed states with the convex-roof construction. In fact, searching for the optimal decomposition in Eq.~\eqref{convex_roof} is generally hard. However, like the entanglement measures, there exist analytical solutions for the states with symmetry  \cite{terhal2000entanglement,vollbrecht2001entanglement}. Here, we study the states related to the permutation group $G_s$ on the reference basis. A element $g\in G_s$ is defined as
\begin{equation}\label{}
\begin{aligned}
g=\left(
  \begin{array}{cccc}
    1 & 2 & ...&d\\
    i_1 & i_2&...&i_d\\
  \end{array}
\right)
\end{aligned}
\end{equation}
and the order (the number of the elements) of $G_s$ is $d!$. The corresponding unitary of $g$ is denoted as $U_g=\sum_k \ket{i_k}\bra{k}$. Then we have the following definition.
\begin{definition}\label{symmetry_state}
A state $\rho$ is a symmetric state if it is invariant under all the permutation unitary operations, i.e., $\forall g\in G_s$, $U_g\rho U_g^\dag=\rho$.
\end{definition}
Denote the symmetric state as $\rho^s$ and the symmetric state set as $S$. Given the maximally coherent state $\ket{\Psi_d}=\frac1{\sqrt{d}}\sum_{i}\ket{i}$, it is not hard to show the explicit form of symmetric states,
\begin{equation}\label{poly:symmstate}
\rho^s=p\ket{\Psi_d}\bra{\Psi_d}+(1-p)\frac{\mathbb{I}}{d},
\end{equation}
which is only determined by a single parameter, the mixing probability $p\in[0,1]$. Apparently, the symmetric state $\rho^s$ is a mixture of the maximally coherent state $\ket{\Psi_d}$ and the maximally mixed state $\mathbb{I}/d$. The state $\ket{\Psi_d}$ is the only pure state in set $S$. Borrowing the techniques used in quantifying entanglement of symmetric states \cite{vollbrecht2001entanglement,sentis2016quantifying}, we obtain an analytical result $C_G(\rho^s)$ in Theorem \ref{Th:CGsymmetric}, following Lemma \ref{twoproperty} and Lemma \ref{convex_setS}.

First, we consider a map
\begin{equation}\label{mixingmap}
\begin{aligned}
\Lambda(\rho)=\frac1{|G_s|}\sum_g U_g \rho U_g^\dag.
\end{aligned}
\end{equation}
It uniformly mixes all the permutation unitary $U_g$ on a state $\rho$, which is an incoherent operation by definition.

\begin{lemma}\label{twoproperty}
The map $\Lambda(\rho)$ defined in Eq.~\eqref{mixingmap} satisfies two properties, $\forall \rho$,
\begin{enumerate} [(1)]
\item
$\Lambda(\rho)\in S$, i.e., the output state is a symmetric state, as defined in Definition \ref{symmetry_state};
\item
$\bra{\Psi_d}\rho \ket{\Psi_d}=\bra{\Psi_d}\Lambda(\rho)\ket{\Psi_d}$, i.e., the map $\Lambda(\rho)$ does not change the overlap with the maximally coherent state $\ket{\Psi_d}$.
\end{enumerate}
\end{lemma}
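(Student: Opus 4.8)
The plan is to exploit two elementary facts about the uniform twirl $\Lambda$: that $g\mapsto U_g$ is a unitary representation of the permutation group $G_s$, and that the maximally coherent state $\ket{\Psi_d}$ is a fixed point of every $U_g$. Once these are in hand, both properties follow with essentially no computation.

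For property (1), I would first verify that the assignment $U_g=\sum_k\ket{i_k}\bra{k}$ is a group homomorphism, i.e. $U_h U_g=U_{hg}$ for all $g,h\in G_s$. This is immediate from the action on basis vectors: since $U_g\ket{k}=\ket{g(k)}$, we get $U_h U_g\ket{k}=\ket{h(g(k))}=\ket{(hg)(k)}=U_{hg}\ket{k}$. Then for an arbitrary $h\in G_s$ I compute $U_h\Lambda(\rho)U_h^\dag=\frac{1}{|G_s|}\sum_g U_{hg}\,\rho\,U_{hg}^\dag$ and invoke the rearrangement lemma for finite groups: as $g$ ranges over $G_s$, so does $hg$. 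Hence the relabelled sum reproduces $\Lambda(\rho)$ exactly, so $U_h\Lambda(\rho)U_h^\dag=\Lambda(\rho)$ for every $h$, which is precisely the defining condition of $S$ in Definition \ref{symmetry_state}. Therefore $\Lambda(\rho)\in S$.

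For property (2), the key observation is the permutation invariance of $\ket{\Psi_d}=\frac{1}{\sqrt{d}}\sum_i\ket{i}$. Because $U_g$ merely relabels the reference basis, $U_g\ket{\Psi_d}=\frac{1}{\sqrt{d}}\sum_i\ket{g(i)}=\ket{\Psi_d}$, and consequently $U_g^\dag\ket{\Psi_d}=\ket{\Psi_d}$ as well. Thus each summand in the average is independent of $g$, namely $\bra{\Psi_d}U_g\rho U_g^\dag\ket{\Psi_d}=\bra{\Psi_d}\rho\ket{\Psi_d}$, and averaging over the $|G_s|$ group elements gives $\bra{\Psi_d}\Lambda(\rho)\ket{\Psi_d}=\bra{\Psi_d}\rho\ket{\Psi_d}$, as claimed.

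I do not expect a substantive obstacle here; the statement is a structural consequence of the group average. The only point deserving a little care is pinning down the homomorphism $U_hU_g=U_{hg}$ under the indexing convention $U_g=\sum_k\ket{i_k}\bra{k}$, since the rearrangement lemma for part (1) relies on it. After that, invariance follows for (1) and the fixed-point property $U_g\ket{\Psi_d}=\ket{\Psi_d}$ settles (2).
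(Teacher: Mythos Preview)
Your proposal is correct and follows essentially the same route as the paper: for (1) you use the rearrangement $g\mapsto hg$ on the group sum exactly as the paper does with $g\mapsto g'g$, and for (2) you use the fixed-point property $U_g\ket{\Psi_d}=\ket{\Psi_d}$, which the paper invokes via $U_g^\dag=U_{g^{-1}}$ and $U_{g^{-1}}\ket{\Psi_d}=\ket{\Psi_d}$. The only addition in your write-up is the explicit verification of the homomorphism $U_hU_g=U_{hg}$, which the paper uses implicitly.
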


\begin{proof}
For any $U_{g'}$ with $g'\in G_s$,
\begin{equation}\label{}
\begin{aligned}
U_{g'} \Lambda(\rho)U_{g'}^\dag
&=\frac{1}{|G_s|}\sum_g (U_{g'}U_g) \rho (U_{g'}U_g)^\dag\\
&=\frac{1}{|G_s|}\sum_g U_{g'g} \rho U_{g'g}^\dag\\
&=\Lambda(\rho).
\end{aligned}
\end{equation}
The last equality is due to the fact that by going through all permutations $g$, the joint permutation $g'g$ also traverses all the permutations in the group $G_s$. By Definition \ref{symmetry_state}, we prove that $\Lambda(\rho)\in S$.

The overlap between the output state $\Lambda(\rho)$ and the maximally coherent state $\ket{\Psi_d}$ is given by,
\begin{equation}\label{}
\begin{aligned}
\bra{\Psi_d}\Lambda(\rho)\ket{\Psi_d}&=\bra{\Psi_d}\frac{1}{|G_s|}\sum_g U_g \rho U_g^\dag\ket{\Psi_d}\\
&=\frac{1}{|G_s|}\sum_g\bra{\Psi_d}U_{g^{-1}}^\dag\rho U_{g^{-1}}\ket{\Psi_d}\\
&=\bra{\Psi_d}\rho\ket{\Psi_d}.
\end{aligned}
\end{equation}
where in the second line we use the relation $U_g^\dag=U_{g^{-1}}$ and the last line is due to the fact that $\ket{\Psi_d}\in S$ and $U_{g^{-1}}\ket{\Psi_d}=\ket{\Psi_d}$.
\end{proof}

Then, we define the following function for a symmetric state $\rho^s$,
\begin{equation}\label{convexroof_epsilon}
\bar{C}_G(\rho^s)=\min_{\ket{\psi}}\{C_G(\ket{\psi})|\Lambda(\ket{\psi}\bra{\psi})=\rho^s\}.
\end{equation}
Since the state $\rho^s$ in Eq.~\eqref{poly:symmstate} only has one parameter $p$, it can be uniquely determined by its overlap with the maximally coherent state $K=\bra{\Psi_d}\rho^s\ket{\Psi_d}=p\frac{d-1}{d}+\frac{1}{d}$. Thus, $\rho^s$ linearly depends on $K$. According to Lemma \ref{twoproperty}, $\Lambda(\ket{\psi}\bra{\psi})$ is a symmetric state and the overlap does not change under the map $\Lambda$. Hence, the constraint $\Lambda(\ket{\psi}\bra{\psi})=\rho^s$ in Eq.~\eqref{convexroof_epsilon} is equivalent to $|\bra{\Psi_d}\psi\rangle|^2=\bra{\Psi_d}\rho^s\ket{\Psi_d}$. Following the derivations of the G-concurrence \cite{sentis2016quantifying}, we solve the minimization problem and obtain an explicit form of $\bar{C}_G(\rho^s)$,
\begin{equation}\label{purestate_optimal}
\bar{C}_G(K)=\left\{
\begin{aligned}
&0 &0\leq K\leq \frac{d-1}{d}\\
&d(ab^{d-1})^{(2/d)} &\frac{d-1}{d}\leq K \leq 1
\end{aligned}
\right.
\end{equation}
where
\begin{equation*}\label{}
\begin{aligned}
&a=\frac{1}{\sqrt{d}}(\sqrt{K}-\sqrt{d-1}\sqrt{1-K}),\\
&b=\frac{1}{\sqrt{d}}(\sqrt{K}+\frac{\sqrt{1-K}}{\sqrt{d-1}}).
\end{aligned}
\end{equation*}
Details can be found in Appendix \ref{App_purestate_optimal}. Here, we substitute $\bar{C}_G(K)$ for $\bar{C}_G(\rho^s)$ without ambiguity. When $\frac{d-1}{d}\leq K \leq 1$, $\bar{C}_G(K)$ is a concave function \cite{sentis2016quantifying}. We show $\bar{C}_G(K)$ in the case of $d=4$ in Fig.~\ref{Fig_concave}. Moreover, following the results of Ref.~\cite{vollbrecht2001entanglement}, we have the following lemma.

\begin{lemma}\label{convex_setS}
The convex-roof of the G-coherence measure $C_G$ for a symmetric state $\rho^s$ is given by,
\begin{equation}\label{Eq:convex_setS}
\begin{aligned}
C_G(\rho^s)&=\min_{\{p_i,\ket{\psi_i}\}}\sum_i p_i C_G(\ket{\psi_i})\\
            &= \min_{\{q_j,\rho_j^s\}} \sum_j q_j\bar{C}_G(\rho_j^s),
\end{aligned}
\end{equation}
where $\sum_i p_i \ket{\psi_i}\bra{\psi_i}=\rho^s$, $\sum_j q_j \rho_j^s=\rho^s$, and $\rho_j^s\in S$.
\end{lemma}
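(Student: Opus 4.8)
The plan is to prove the nontrivial second equality in Eq.~\eqref{Eq:convex_setS} by a two-sided (sandwich) argument in the spirit of Ref.~\cite{vollbrecht2001entanglement}; the first equality is simply the definition of the convex roof in Eq.~\eqref{convex_roof}, so it carries no content. Write $M$ for the right-hand side, $M=\min_{\{q_j,\rho_j^s\}}\sum_j q_j \bar{C}_G(\rho_j^s)$, where the minimum runs over decompositions of $\rho^s$ into symmetric states $\rho_j^s\in S$. I would establish $C_G(\rho^s)\ge M$ and $C_G(\rho^s)\le M$ separately.

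For the direction $C_G(\rho^s)\ge M$, I would start from an optimal pure decomposition $\rho^s=\sum_i p_i \ket{\psi_i}\bra{\psi_i}$ achieving $C_G(\rho^s)=\sum_i p_i C_G(\ket{\psi_i})$ and apply the mixing map $\Lambda$ of Eq.~\eqref{mixingmap}. Since $\rho^s\in S$ and $\Lambda$ leaves every symmetric state invariant, $\rho^s=\Lambda(\rho^s)=\sum_i p_i \Lambda(\ket{\psi_i}\bra{\psi_i})$, and by Lemma~\ref{twoproperty}(1) each $\sigma_i:=\Lambda(\ket{\psi_i}\bra{\psi_i})$ is symmetric, so $\{p_i,\sigma_i\}$ is a feasible symmetric decomposition. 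Because $\ket{\psi_i}$ is itself admissible in the minimization Eq.~\eqref{convexroof_epsilon} defining $\bar{C}_G(\sigma_i)$, one has $\bar{C}_G(\sigma_i)\le C_G(\ket{\psi_i})$; averaging gives $M\le\sum_i p_i \bar{C}_G(\sigma_i)\le\sum_i p_i C_G(\ket{\psi_i})=C_G(\rho^s)$.

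For the converse $C_G(\rho^s)\le M$, which I expect to be the crux, I would fix any symmetric decomposition $\rho^s=\sum_j q_j \rho_j^s$. By Lemma~\ref{twoproperty} the constraint in Eq.~\eqref{convexroof_epsilon} reduces to the single overlap condition $|\braket{\Psi_d}{\phi}|^2=\bra{\Psi_d}\rho_j^s\ket{\Psi_d}$, so the feasible set is compact and a minimizer $\ket{\phi_j}$ with $C_G(\ket{\phi_j})=\bar{C}_G(\rho_j^s)$ exists. I would then use the explicit group-average form of $\Lambda$ to expand $\rho_j^s=\Lambda(\ket{\phi_j}\bra{\phi_j})=\frac{1}{|G_s|}\sum_g U_g\ket{\phi_j}\bra{\phi_j}U_g^\dag$, and substitute into $\rho^s=\sum_j q_j\rho_j^s$. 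This produces an explicit pure-state decomposition of $\rho^s$ with weights $q_j/|G_s|$ and states $U_g\ket{\phi_j}$, whose average cost is $\sum_j q_j \frac{1}{|G_s|}\sum_g C_G(U_g\ket{\phi_j})$. Invoking the permutation invariance of $C_G$ established in Eq.~\eqref{equal_coherence} (equivalently, $C_G=d|a_1\cdots a_d|^{2/d}$ is manifestly symmetric under relabeling of the basis), each $C_G(U_g\ket{\phi_j})=C_G(\ket{\phi_j})=\bar{C}_G(\rho_j^s)$, so the cost collapses to $\sum_j q_j \bar{C}_G(\rho_j^s)$. As $C_G(\rho^s)$ is the infimum over all pure decompositions, $C_G(\rho^s)\le\sum_j q_j\bar{C}_G(\rho_j^s)$ for every symmetric decomposition, hence $C_G(\rho^s)\le M$.

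Combining the two inequalities yields $C_G(\rho^s)=M$. The main obstacle is the converse direction: one must check that the minimizer defining $\bar{C}_G$ genuinely exists (guaranteed by the overlap reduction from Lemma~\ref{twoproperty}), and that reassembling the per-$j$ group averages into a single global decomposition of $\rho^s$ preserves the cost. This works precisely because $\Lambda$ is a uniform average over permutation unitaries and $C_G$ is invariant under each such unitary, so no cost is lost in passing from the symmetric picture back to the pure-state picture.
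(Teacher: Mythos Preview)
Your proposal is correct and follows essentially the same approach as the paper: both arguments hinge on applying $\Lambda$ to pass from pure decompositions to symmetric ones, expanding $\Lambda$ as a uniform permutation average to pass back, and using the permutation invariance of $C_G$ to keep the cost unchanged. The only cosmetic difference is organizational---the paper introduces an auxiliary quantity $Z_3=\min\{\sum_i p_i C_G(\ket{\psi_i}):\sum_i p_i\Lambda(\ket{\psi_i}\bra{\psi_i})=\rho^s\}$ and shows $Z_1=Z_3=Z_2$, whereas you prove $Z_1\ge Z_2$ and $Z_1\le Z_2$ directly; your extra remark on the existence of the minimizer in $\bar C_G$ is a welcome point of care that the paper leaves implicit.
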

\begin{proof}
Denote $Z_1=\min_{\{p_i,\ket{\psi_i}\}}\sum_i p_i C_G(\ket{\psi_i})$ and $Z_2=\min_{\{q_j,\rho_j^s\}} \sum_j q_j\bar{C}_G(\rho_j^s)$. Now we prove the lemma by showing that both of them equal to,
\begin{equation}\label{middleCG}
\begin{aligned}
Z_3=\min_{\{p_i,\ket{\psi_i}\}}\left\{\sum_i p_i C_G(\ket{\psi_i})\bigg|\sum_ip_i\Lambda(\ket{\psi_i}\bra{\psi_i})=\rho^s\right\}.
\end{aligned}
\end{equation}

$Z_1=Z_3$: For a decomposition, $\rho^s=\sum_i p_i \ket{\psi_i}\bra{\psi_i}$, after applying the map $\Lambda$ on both sides, we have
\begin{equation}\label{}
\begin{aligned}
\sum_ip_i\Lambda(\ket{\psi_i}\bra{\psi_i})=\Lambda(\rho^s)=\rho^s.
\end{aligned}
\end{equation}
Here, we use the fact that $\rho^s$ is a symmetric state, which is invariant under the map $\Lambda$. That is, any decomposition satisfies the constraint $\sum_i p_i \ket{\psi_i}\bra{\psi_i}=\rho^s$ as required for $Z_1$ also satisfies the constraint $\sum_ip_i\Lambda(\ket{\psi_i}\bra{\psi_i})=\rho^s$ as required for $Z_3$. Thus, we have $Z_3\leq Z_1$. On the other hand, the constraint $\sum_ip_i\Lambda(\ket{\psi_i}\bra{\psi_i})=\rho^s$  in Eq.~\eqref{middleCG} is also a pure-state decomposition of the state $\rho^s$, since every component in $\Lambda(\ket{\psi_i}\bra{\psi_i})$ is a pure state $U_g
\ket{\psi_i}$ with probability $p_i/|G_s|$. Thus we also have $Z1\leq Z3$. Consequently, $Z1=Z3$.

$Z_2=Z_3$: In fact, the constraint in Eq.~\eqref{middleCG} is on $\Lambda(\ket{\psi_i}\bra{\psi_i})\in S$, thus we can solve the minimization problem of Eq.~\eqref{middleCG} in two steps. First, given $\Lambda(\ket{\psi_i}\bra{\psi_i})\in S$, we minimize $C_G(\ket{\psi_i})$, which turns out to be the same as the definition of $\bar{C}_G(\Lambda(\ket{\psi_i}\bra{\psi_i}))$ in Eq.~\eqref{convexroof_epsilon}. Next, we optimize the decomposition of $\rho^s$ in the symmetric state set $S$, which turns out to be the same as the definition of $Z_2$. Thus we have $Z_2=Z_3$.
\end{proof}

\begin{figure}[thb]
\centering
\resizebox{7.5cm}{!}{\includegraphics[scale=0.8]{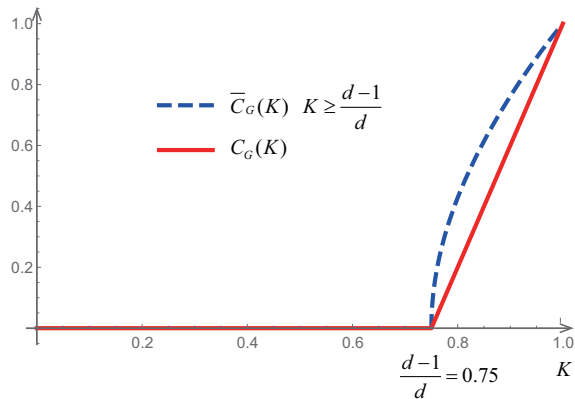}}
\caption{Illustration for the two functions $\bar{C}_G(K)$ and $C_G(K)$ in $d=4$ case. When $0\leq K\leq \frac{d-1}{d}=0.75$, $\bar{C}_G(K)=0$; when $\frac{d-1}{d}=0.75\leq K\leq1$,  $\bar{C}_G(K)$ is a concave function following the form in Eq.~\eqref{purestate_optimal}, represented by the dashed blue line. Thus the minimization result via Eq.~\eqref{convex:on:K},  $C_G(K)$ is the linear function $1-4(1-K)$, when $\frac{d-1}{d}=0.75\leq K\leq1$, described by the red line.}\label{Fig_concave}
\end{figure}

\begin{theorem}\label{Th:CGsymmetric}
For a symmetric state $\rho^s \in S$ in $\mathcal{H}_d$, the G-coherence measure is given by
\begin{equation}\label{CGsymmetric}
\begin{aligned}
C_G(\rho^s) &=\max \{1-d(1-K),0\}, \\
\end{aligned}
\end{equation}
where $K=\bra{\Psi_d}\rho^s\ket{\Psi_d}$ is the overlap between $\rho_s$ and the maximally coherent state $\ket{\Psi_d}$.
\end{theorem}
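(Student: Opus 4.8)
The plan is to reduce the mixed-state optimization to a one-dimensional convex-hull computation and then read off the envelope from the known shape of $\bar{C}_G(K)$. The starting point is Lemma~\ref{convex_setS}, which already rewrites the convex roof over all pure-state decompositions as an optimization over decompositions into \emph{symmetric} states, $C_G(\rho^s)=\min_{\{q_j,\rho_j^s\}}\sum_j q_j\bar{C}_G(\rho_j^s)$. Since every symmetric state is uniquely labelled by its overlap $K_j=\bra{\Psi_d}\rho_j^s\ket{\Psi_d}$, and since this overlap is linear in any mixture, the constraint $\rho^s=\sum_j q_j\rho_j^s$ forces $K=\sum_j q_j K_j$. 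Hence the whole problem collapses to the scalar optimization $C_G(K)=\min\{\sum_j q_j\bar{C}_G(K_j)\mid \sum_j q_j K_j=K,\ \sum_j q_j=1,\ q_j\ge0\}$, which is by definition the lower convex envelope (greatest convex minorant) of the single-variable function $\bar{C}_G(K)$ over the admissible interval of overlaps.

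Next I would feed in the explicit form of $\bar{C}_G(K)$ from Eq.~\eqref{purestate_optimal}: it vanishes for $K\le(d-1)/d$ and, on $[(d-1)/d,1]$, equals the concave function $d(ab^{d-1})^{2/d}$ with boundary values $\bar{C}_G((d-1)/d)=0$ and $\bar{C}_G(1)=C_G(\ket{\Psi_d})=1$, the latter because $a_i=1/\sqrt{d}$ gives $d\,(d^{-d/2})^{2/d}=1$. Because a concave function lies above every one of its chords, the greatest convex minorant on $[(d-1)/d,1]$ is simply the straight chord joining $((d-1)/d,0)$ to $(1,1)$; this chord has slope $d$ and equation $1-d(1-K)$, while on $[1/d,(d-1)/d]$ the envelope stays at $0$. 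Collecting the two pieces yields precisely $\max\{1-d(1-K),0\}$, the claimed formula, with the optimal decomposition being the mixture of the zero-coherence symmetric state at $K=(d-1)/d$ with the maximally coherent state $\ket{\Psi_d}$ at $K=1$.

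To make this rigorous I would confirm that the candidate $g(K)=\max\{1-d(1-K),0\}$ is genuinely the convex hull by checking three points: $g$ is convex, being the maximum of two affine functions; $g\le\bar{C}_G$ pointwise, with equality on the vanishing region and by the chord-below-concave property on $[(d-1)/d,1]$; and $g$ dominates every convex minorant $h$, since the convexity of $h$ together with $h((d-1)/d)\le0$ and $h(1)\le1$ forces $h(K)\le 1-d(1-K)=g(K)$ whenever $K=\lambda(d-1)/d+(1-\lambda)$. I expect the main obstacle to be this reduction step rather than the envelope computation itself: one must argue carefully that the linear dependence of $K$ on the decomposition, combined with Lemma~\ref{convex_setS}, allows an arbitrary symmetric decomposition to be replaced by a mixture of at most two extreme overlaps, so that minimizing the convex roof is exactly taking the lower convex envelope of $\bar{C}_G$ in the single scalar $K$. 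Once this equivalence is secured, the concavity of $\bar{C}_G$ on the upper interval makes the remaining calculation immediate.
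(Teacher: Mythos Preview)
Your proposal is correct and follows essentially the same route as the paper's proof: both invoke Lemma~\ref{convex_setS} to reduce the convex roof to a one-parameter minimization over the overlap $K$, then use the vanishing of $\bar{C}_G$ on $[1/d,(d-1)/d]$ and its concavity on $[(d-1)/d,1]$ (with endpoint values $0$ and $1$) to identify the minimizer as the chord $1-d(1-K)$. Your version is simply more explicit in naming the construction as the lower convex envelope and in verifying the three properties that pin down $g(K)=\max\{1-d(1-K),0\}$, but the underlying argument is the same.
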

\begin{proof}
According to Lemma \ref{convex_setS}, the G-coherence measure for a symmetric state is given by $C_G(\rho^s)=\min_{\{q_j,\rho_j^s\}} \sum_j q_j\bar{C}_G(\rho_j^s)$ with $\sum_j q_j \rho_j^s=\rho^s$. Since the symmetric state linearly depends on the overlap $K$, this minimization is equivalent to,
\begin{equation}\label{convex:on:K}
\begin{aligned}
C_G(K)=\min_{\{q_j,K_j\}} \left\{ \sum_j q_j\bar{C}_G(K_j)\bigg|\sum_j q_jK_j=K\right\}\\
\end{aligned}
\end{equation}
Then, according to the explicit expression of $\bar{C}_G(K)$ in Eq.~\eqref{purestate_optimal}: When $0\leq K\leq \frac{d-1}{d}$, $\bar{C}_G(K)=0$. Thus, $C_G(K)\leq \bar{C}_G(K)=0$. When $\frac{d-1}{d}\leq K \leq 1$, fortunately, $\bar{C}_G(K)$ is a concave function. It is not hard to find that the optimization result is a straight line connecting the point $\{\frac{d-1}{d},0\}$ and $\{1,1\}$ on the $\{K,C_G(K)\}$ plane. Consequently, $C_G(\rho^s)$ shows the form in Eq.~\eqref{CGsymmetric}.
\end{proof}

The dependence of $\bar{C}_G(K)$ and $C_G(K)$ on $K$ in the case of $d=4$ are plotted in Fig.~\ref{Fig_concave}. Furthermore, we can give a lower bound of the G-coherence measure $C_G$ for any general mixed state $\rho$, with the analytical solution for $\rho^s$ in Theorem \ref{Th:CGsymmetric}.

\begin{corollary}
For a mixed state $\rho$,
\begin{equation}\label{C_Ganystate}
C_G(\rho)\ge \max [1-d(1-K),0]
\end{equation}
where $K=\bra{\Psi_d}\rho\ket{\Psi_d}$.
\end{corollary}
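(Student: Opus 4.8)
The plan is to reduce this general-state bound to the already-solved symmetric case by pushing $\rho$ through the symmetrizing channel $\Lambda$ of Eq.~\eqref{mixingmap}. Since $\Lambda$ is an incoherent operation and $C_G$ is monotone, symmetrizing can only decrease the coherence, while Lemma~\ref{twoproperty} guarantees that the overlap $K$ with $\ket{\Psi_d}$ is unchanged. Evaluating $C_G$ on the symmetric output via Theorem~\ref{Th:CGsymmetric} then reproduces exactly the claimed right-hand side, so the inequality falls out by chaining the two facts.

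Concretely, the first step is to establish the monotonicity inequality $C_G(\rho)\ge C_G(\Lambda(\rho))$. Rather than invoking the post-selected criterion $(C2)$ directly on the averaged channel, the cleanest route is to note that each permutation unitary $U_g$ is an incoherent unitary, so $(C2)$ applied in both directions forces $C_G(U_g\rho U_g^\dag)=C_G(\rho)$ (this is the same reasoning that produced Eq.~\eqref{equal_coherence}). Convexity $(C3)$ applied to the uniform mixture $\Lambda(\rho)=\frac{1}{|G_s|}\sum_g U_g\rho U_g^\dag$ then gives
\[
C_G(\Lambda(\rho))\le \frac{1}{|G_s|}\sum_g C_G(U_g\rho U_g^\dag)=\frac{1}{|G_s|}\sum_g C_G(\rho)=C_G(\rho).
\]

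The second step is to identify the output state. By property (1) of Lemma~\ref{twoproperty}, $\Lambda(\rho)\in S$ is a symmetric state, and by property (2) its overlap is $\bra{\Psi_d}\Lambda(\rho)\ket{\Psi_d}=\bra{\Psi_d}\rho\ket{\Psi_d}=K$. Applying Theorem~\ref{Th:CGsymmetric} to this symmetric state yields $C_G(\Lambda(\rho))=\max\{1-d(1-K),0\}$, and combining with the monotonicity inequality above completes the argument.

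The step requiring the most care is the first one: certifying that the symmetrizing map legitimately decreases $C_G$. The subtlety is that $(C2)$ is phrased with post-selection on individual Kraus branches, so one cannot apply it naively to the averaged state $\Lambda(\rho)$. Routing instead through the \emph{exact} invariance of $C_G$ under each incoherent unitary branch, together with convexity $(C3)$, sidesteps this issue cleanly; this is the part I would want to state precisely, while everything else follows immediately from the two cited results.
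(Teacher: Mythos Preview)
Your proposal is correct and follows essentially the same route as the paper: symmetrize with $\Lambda$, use monotonicity of $C_G$ under this incoherent map, invoke Lemma~\ref{twoproperty} to preserve $K$ and land in $S$, then apply Theorem~\ref{Th:CGsymmetric}. The only difference is that you spell out the monotonicity step via branch-wise unitary invariance plus convexity $(C3)$, whereas the paper simply cites that $\Lambda$ is an incoherent operation and uses the averaged monotonicity $C_G(\rho)\ge C_G(\Lambda(\rho))$ directly; your more explicit justification is a welcome clarification but not a different argument.
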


\begin{proof}
Since $\Lambda$ is an incoherent operation, we have,
\begin{equation}\label{}
\begin{aligned}
C_G(\rho)\geq C_G(\Lambda(\rho)).
\end{aligned}
\end{equation}
From Lemma \ref{twoproperty}, we know that the overlap $K=\bra{\Psi_d}\rho\ket{\Psi_d}=\bra{\Psi_d}\Lambda(\rho)\ket{\Psi_d}$ and $\Lambda(\rho)\in S$. Following Theorem \ref{Th:CGsymmetric}, the corollary holds.
\end{proof}

In fact, the tightness of the bound depends on the overlap. Thus, we can enhance the bound by pre-treating the state by a certain ICPTP $\chi$ that can increase the overlap, i.e.,
\begin{equation}\label{}
\begin{aligned}
C_G(\rho)\geq C_G(\chi(\rho))\geq C_G(\Lambda(\chi(\rho)))\geq \max [1-d(1-K'),0],
\end{aligned}
\end{equation}
where $K'=\bra{\Psi_d}\chi(\rho)\ket{\Psi_d}>K=\bra{\Psi_d}\rho\ket{\Psi_d}$.

\section{conclusion and outlook}\label{sec6}
In this paper, we give the definition of polynomial coherence measure $C_p(\rho)$, which is an analog to the definition of polynomial invariant in classifying and quantifying the entanglement resource. First, we show that there is no polynomial coherence measure satisfying criterion $(C1')$ in Table.~\ref{table_coherence}, when the dimension of the Hilbert space $d$ is larger than $2$. That is, there always exist some pure states $\ket{\psi}\neq \ket{i}(i=1,...,d)$ possessing zero-coherence when $d\geq3$. Then, we find a very restrictive necessary condition for polynomial coherence measures --- the coherence measure should vanish if the rank of the corresponding dephased state $\Delta(\ket{\psi}\bra{\psi})$ is smaller than the Hilbert space dimension $d$. Meanwhile, we give an example of polynomial coherence measure $C_G(\rho)$, called G-coherence measure. We derive an analytical formula of the convex-roof of $C_G$ for symmetric states, and also give a lower bound of $C_G$ for general mixed state. In addition, we should remark that the symmetry consideration in our paper is also helpful to understand and bound the other coherence measures, especially the ones built by the convex-roof method.

In entanglement quantification, the polynomial invariant is an entanglement monotone if and only if its degree $\eta\leq4$ in the multi-qubit system \cite{verstraete2003normal,eltschka2012multipartite}. Here, the quantification theory of coherence shows many similarities to the one for entanglement. Following the similar approaches in our paper, some results can be extended to the entanglement case. For example, one can obtain some necessary conditions where a polynomial invariant serves as an entanglement monotone, in more general multi-partite system $\mathcal{H}={\mathcal{H}_{d_l}}^{\otimes{N}}$, whose local dimension $d_l>2$ \cite{gour2013classification}.

After finishing the manuscript, we find that a coherence measure similar to $C_G(\rho)$ is also put forward in Ref.~\cite{chin2017generalized}, dubbed generalized coherence concurrence, by analog to the generalized concurrence for entanglement \cite{gour2005family}. However, the analytical solutions and its relationship with polynomial coherence measure are not presented in Ref.~\cite{chin2017generalized}.

\acknowledgments
We acknowledge J.~Ma and T.~Peng for the insightful discussions. This work was supported by the National Natural Science Foundation of China Grants No.~11674193.

\appendix
\section{Proof of Theorem~\ref{th_coherence} for $d=3$}\label{d=3}
In the main part, Theorem~\ref{th_coherence} for the case of $d\geq4$ has been proved. Here we prove the $d=3$ case. First, a Lemma that is an extension of Lemma.\ref{lemma_root} follows.
\begin{lemma}\label{lemma_root_2}
For any polynomial coherence measure $C_p(\ket{\psi})$,
and any two pure quantum states $\ket{\psi_1}, \ket{\psi_2}$ satisfying $|\bra{\psi_2}\psi_1\rangle|<1$, there is at least one zero-coherence state in the superposition space of them.
\end{lemma}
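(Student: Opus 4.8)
The plan is to deduce the statement from the already-established Lemma~\ref{lemma_root}, which settles the orthogonal case, by observing that the only role played by orthogonality there is to fix a well-behaved parametrization of a two-dimensional superposition space. The weaker hypothesis $|\braket{\psi_2}{\psi_1}|<1$ is, for normalized states, exactly the statement that $\ket{\psi_1}$ and $\ket{\psi_2}$ are linearly independent: were $|\braket{\psi_2}{\psi_1}|=1$ we would have $\ket{\psi_2}=e^{i\theta}\ket{\psi_1}$ and the ``superposition space'' would collapse to a single ray. Hence the hypothesis guarantees that $V=\mathrm{span}\{\ket{\psi_1},\ket{\psi_2}\}$ is genuinely two-dimensional, and this is all we need.

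First I would orthonormalize. Applying Gram--Schmidt to $\ket{\psi_1},\ket{\psi_2}$ produces an orthonormal pair $\ket{e_1},\ket{e_2}$ with $\mathrm{span}\{\ket{e_1},\ket{e_2}\}=V$; concretely one may take $\ket{e_1}=\ket{\psi_1}$ and $\ket{e_2}\propto \ket{\psi_2}-\braket{\psi_1}{\psi_2}\ket{\psi_1}$, the latter being nonzero precisely because of linear independence. I would then invoke Lemma~\ref{lemma_root} on the orthogonal states $\ket{e_1},\ket{e_2}$: there exist $\alpha,\beta$ with $|\alpha|^2+|\beta|^2=1$ such that $\ket{\chi}=\alpha\ket{e_1}+\beta\ket{e_2}$ obeys $C_p(\ket{\chi})=0$. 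Since $\ket{\chi}\in V$, it can be re-expressed as a superposition $\alpha'\ket{\psi_1}+\beta'\ket{\psi_2}$ of the original two states, which is exactly the desired zero-coherence state in their superposition space, closing the argument.

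There is essentially no hard obstacle here: the whole content is that Lemma~\ref{lemma_root} already locates a zero of $C_p$ inside \emph{any} two-dimensional subspace, and such a subspace is independent of the orthonormal frame used to describe it. The single point demanding care is to verify that the hypothesis $|\braket{\psi_2}{\psi_1}|<1$ is used exactly where required, namely to ensure $V$ is two-dimensional so that the Gram--Schmidt step does not degenerate.

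For a self-contained argument mirroring the proof of Lemma~\ref{lemma_root}, I would instead parametrize the superposition directly as $\ket{\psi(\omega)}=(\ket{\psi_1}+\omega\ket{\psi_2})/\|\ket{\psi_1}+\omega\ket{\psi_2}\|$ and write, after reducing to $m=1$ and invoking homogeneity, $C_p(\ket{\psi(\omega)})=\|\ket{\psi_1}+\omega\ket{\psi_2}\|^{-h}\,|P_h(\ket{\psi_1}+\omega\ket{\psi_2})|$. The normalization $\|\ket{\psi_1}+\omega\ket{\psi_2}\|^2=1+2\,\mathrm{Re}(\omega\braket{\psi_1}{\psi_2})+|\omega|^2$ is strictly positive and finite for every finite $\omega$ — again by linear independence — so any root of the degree-$h$ polynomial $P_h(\ket{\psi_1}+\omega\ket{\psi_2})$ (or the limit $\omega\to\infty$ when $C_p(\ket{\psi_2})=0$) produces a zero-coherence state, reproducing the conclusion without appealing to Gram--Schmidt.
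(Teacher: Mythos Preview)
Your proposal is correct. The paper's proof follows your \emph{second}, self-contained route: it parametrizes the superposition as $(\ket{\psi_1}+\omega\ket{\psi_2})/Z(\omega)$ with the explicit normalization $Z(\omega)=\sqrt{1+|\omega|^2+2|\omega|k\cos(\theta+\theta')}$, reduces to $m=1$, disposes of the case $C_p(\ket{\psi_2})=0$ trivially, and otherwise factors $|P_h(\ket{\psi_1}+\omega\ket{\psi_2})|$ into $h$ linear factors to locate a root.

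Your primary Gram--Schmidt argument is a genuinely different and cleaner route: rather than redoing the root-finding analysis with a more complicated normalization factor, you observe that the zero-coherence conclusion of Lemma~\ref{lemma_root} is really a statement about the two-dimensional subspace $V$, not about the particular orthonormal frame used to describe it, and the hypothesis $|\braket{\psi_2}{\psi_1}|<1$ is precisely what makes $V$ two-dimensional. This buys you a one-line reduction to the already-proved case and avoids rewriting the polynomial factorization; the paper's direct approach, by contrast, is self-contained and makes the dependence on $\omega$ explicit, but at the cost of essentially duplicating the proof of Lemma~\ref{lemma_root}.
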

\begin{proof}
Like in Lemma.\ref{lemma_root}, without loss of generality, we just need to consider the scenario of power $m=1$.
First, if $C_p(\ket{\psi_2})=0$, the Lemma holds automatically. So we focus on the $C_p(\ket{\psi_2})\neq0$ case in the following.

Let us denote $\bra{\psi_1}\psi_2\rangle=k e^{i\theta}$ with $k<1$. Then, after ignoring the global phase, any superposition state of $\ket{\psi_1}$ and $\ket{\psi_2}$ can be represented by
\begin{equation}\label{}
\begin{aligned}
\ket{\psi}=\frac{\ket{\psi_1}+\omega\ket{\psi_2}}{Z(\omega)},
\end{aligned}
\end{equation}
where $\omega$ is a complex number and the normalization factor $Z(\omega)=|\ket{\psi_1}+\omega\ket{\psi_2}|=\sqrt{1+|\omega|^2+2|\omega|k\cos(\theta+\theta'})$ with $\omega=|\omega|e^{i\theta'}$.

Similar to Lemma.~\ref{lemma_root}, we can factorize $C_p(\ket{\psi})$ as
\begin{equation}\label{}
\begin{aligned}
C_p(\ket{\psi})&=\left|P_h\left(\frac{\ket{\psi_1}+\omega\ket{\psi_2}}{Z(\omega)}\right)\right|\\
&=\frac{1}{Z(\omega)^h} |P_h(\ket{\psi_1}+\omega\ket{\psi_2})|\\
&=\frac{A'}{Z(\omega)^h}\Pi_{i=1}^h|\omega-z_i|,
\end{aligned}
\end{equation}
where $A'$ is some constant and $z_i(i=1,2,...,h)$ are the roots for the polynomial function $P_h(\ket{\psi_1}+\omega\ket{\psi_2})$. Thus we can find at least one root in this $C_p(\ket{\psi_2})\neq0$ case, or equivalently, a zero-coherence state.
\end{proof}
With the help of Lemma.\ref{lemma_root_2}, now we prove Th.~\ref{th_coherence} for $d=3$ case .
First, similar to the main part, we can choose two states with non-zero coherence as,
\begin{equation}\label{}
\begin{aligned}
\ket{\psi_1}&=\frac{1}{\sqrt{2}}(\ket{1}+\ket{2}),\\ \ket{\psi_2}&=\frac{1}{\sqrt{2}}(\ket{2}+\ket{3}).\\
\end{aligned}
\end{equation}

Even though these two states share overlap with each other, any superposition state
$\alpha \ket{\psi_1} + \beta \ket{\psi_2}$ should not equal to
the pure state $\ket{i}(i=1,2...d)$ in the computational basis.
As required by the criterion $(C1')$ in Table.~\ref{table_coherence}, $\ket{i}(i=1,2...d)$ are the only zero-coherence pure state. Thus,
$C(\alpha \ket{\psi_1} + \beta \ket{\psi_2})>0$.
Nonetheless, it is contradict to Lemma.~\ref{lemma_root_2}. Consequently, there is no polynomial coherence measure satisfying criterion $(C1')$ for $d=3$ case.

\section{Proof for $k=0$ in Eq.~\eqref{k_decrease}}\label{k=0}
In the main part, the coherence measure for the superposition state of $\ket{\psi_1}\in \mathcal{H}_{d_1}$ and $\ket{\psi_2}\in \mathcal{H}_{d_2}$ shows,
\begin{equation*}
\begin{aligned}
C_p(\ket{\psi})=k(1+|\omega|^2)^{-h/2}.
\end{aligned}
\end{equation*}

If $k>0$, the coherence measure strictly decreases with the increasing of $|\omega|$. That is, for any superposition state $\ket{\psi}=(\ket{\psi_1}+\omega\ket{\psi_2})/\sqrt{1+|\omega|^2}$ with $|\omega|>0$, we have $C_p(\ket{\psi})<C_p(\ket{\psi_1})$. We denote the state coefficients by $\alpha=(1+|\omega|^2)^{-1/2}$ and $\beta=\omega(1+|\omega|^2)^{-1/2}$ here. In the following, we show that there exists a state $\ket{\psi}=\alpha \ket{\psi_1}+\beta\ket{\psi_2}$ with $\alpha<1$ (or equivalently $|\omega|>0$), such that $C_p(\ket{\psi})\geq C_p(\ket{\psi_1})$. As a result, this contradiction leads to $k=0$.

From Refs.~\cite{du2015conditions,winter2016operational}, we know that $\ket{\Psi}=\sum_{i=1}^d\Psi_i\ket{i}$ can transform to $\ket{\Phi}=\sum_{i=1}^d\Phi_i\ket{i}$ via incoherent operation, if $(|\Psi_1|^2,...,|\Psi_d|^2)^t$ is majorized by $(|\Phi_1|^2,...,|\Phi_d|^2)^t$.
Then combing the criteria $(C2)$ and $(C3)$ in Table.~\ref{table_coherence}, we obtain that the coherence measure is non-increasing after incoherent operation. Thus, $C(\ket{\Psi})\geq C(\ket{\Phi})$ for any coherence measure.

In our case,
first, we denote $\ket{\psi_1}=\sum_{i=1}^{d_1}a_i\ket{i}$ with $\forall i, |a_i|>0$. And choose
$\ket{\psi_2}=\frac{1}{\sqrt{d_2}}\sum_{i=d_1+1}^d\ket{i}$.
Then we can build a state $\ket{\psi}=\alpha \ket{\psi_1} + \beta \ket{\psi_2}$ that satisfies $\alpha<1$ and $C_p(\ket{\psi})\geq C_p(\ket{\psi_1})$, with the help of the aforementioned majorization condition.

To be specific, if $\alpha$ satisfying,
\begin{equation}\label{majorization}
\begin{aligned}
\alpha^2|a_j|^2\geq \beta^2/d_2,
\end{aligned}
\end{equation}
where $|a_j|^2$ is the minimal value in $\{|a_i|^2\}$, then
$(\alpha^2|a_1|^2,\alpha^2|a_2|^2,...,\alpha^2|a_{d_1}|^2,\beta^2/d_2,...,\beta^2/d_2)^t$ is majorized by $(|a_1|^2,|a_2|^2,...,|a_{d_1}|^2,0,...,0)^t$. Thus, $C_p(\ket{\psi})\geq C_p(\ket{\psi_1})$.
In fact, $\alpha=(d_2|a_j|^2+1)^{-1/2}<1$, when the inequality is saturated in Eq.~\ref{majorization}.
\section{Derivation of Eq.~\eqref{purestate_optimal}} \label{App_purestate_optimal}
As pointed in the main part, the constraint for the pure state $\ket{\psi}=\sum_i a_i\ket{i}$ in Eq.~\eqref{convexroof_epsilon} is the overlap $K=|\bra{\Psi_d}\psi\rangle|^2$, i.e.,
\begin{equation}\label{c1}
\begin{aligned}
|\sum_i a_i|=\sqrt{dK},
\end{aligned}
\end{equation}
and the coefficients $a_i$ of the state should also satisfy the normalization condition,
\begin{equation}\label{c2}
\sum_i |a_i|^2=1.
\end{equation}

When $0\leq K\leq \frac{d-1}{d}$, we can always set one of the coefficients $a_j=0$ with $j\in\{i\}$, and let the corresponding $C_G$ equal to $0$. Thus $\bar{C}_G(K)=0$ in this $K$ domain.

On the other hand, all the coefficients $a_i\neq 0$, when $\frac{d-1}{d}\leq K \leq 1$. In this $K$ domain, we should minimize $C_G(\ket{\psi})=d(\Pi_i|a_i|)^{\frac{2}{d}}$ under the constraints in Eq.~\eqref{c1} and Eq.~\eqref{c2}. Note that $\sum_i|a_i|\geq|\sum_ia_i|$ and the equality can be reached when the coefficients share the same phase. Thus the constraint in Eq.~\eqref{c1} can be replaced by,
\begin{equation}\label{c3}
\begin{aligned}
\sum_i |a_i|=\sqrt{dK}.
\end{aligned}
\end{equation}

In fact, the function optimized here is the same to the one in Ref.~\cite{sentis2016quantifying} for the G-concurrence, after substituting the Schmidt coefficients for the state coefficients $|a_i|$.  Thus, utilizing the same Lagrange multipliers in Supplemental Material of Ref.~\cite{sentis2016quantifying}, we can obtain Eq.~\eqref{purestate_optimal} in the main part. And we can show that $\bar{C}_G(K)$ is a concave function, when $\frac{d-1}{d}\leq K \leq 1$, by directly following the derivation there.
\bibliographystyle{apsrev4-1}

\bibliography{bibpolycoherence}

\begin{thebibliography}{43}%
\makeatletter
\providecommand \@ifxundefined [1]{%
 \@ifx{#1\undefined}
}%
\providecommand \@ifnum [1]{%
 \ifnum #1\expandafter \@firstoftwo
 \else \expandafter \@secondoftwo
 \fi
}%
\providecommand \@ifx [1]{%
 \ifx #1\expandafter \@firstoftwo
 \else \expandafter \@secondoftwo
 \fi
}%
\providecommand \natexlab [1]{#1}%
\providecommand \enquote  [1]{``#1''}%
\providecommand \bibnamefont  [1]{#1}%
\providecommand \bibfnamefont [1]{#1}%
\providecommand \citenamefont [1]{#1}%
\providecommand \href@noop [0]{\@secondoftwo}%
\providecommand \href [0]{\begingroup \@sanitize@url \@href}%
\providecommand \@href[1]{\@@startlink{#1}\@@href}%
\providecommand \@@href[1]{\endgroup#1\@@endlink}%
\providecommand \@sanitize@url [0]{\catcode `\\12\catcode `\$12\catcode
  `\&12\catcode `\#12\catcode `\^12\catcode `\_12\catcode `\%12\relax}%
\providecommand \@@startlink[1]{}%
\providecommand \@@endlink[0]{}%
\providecommand \url  [0]{\begingroup\@sanitize@url \@url }%
\providecommand \@url [1]{\endgroup\@href {#1}{\urlprefix }}%
\providecommand \urlprefix  [0]{URL }%
\providecommand \Eprint [0]{\href }%
\providecommand \doibase [0]{http://dx.doi.org/}%
\providecommand \selectlanguage [0]{\@gobble}%
\providecommand \bibinfo  [0]{\@secondoftwo}%
\providecommand \bibfield  [0]{\@secondoftwo}%
\providecommand \translation [1]{[#1]}%
\providecommand \BibitemOpen [0]{}%
\providecommand \bibitemStop [0]{}%
\providecommand \bibitemNoStop [0]{.\EOS\space}%
\providecommand \EOS [0]{\spacefactor3000\relax}%
\providecommand \BibitemShut  [1]{\csname bibitem#1\endcsname}%
\let\auto@bib@innerbib\@empty
\bibitem [{\citenamefont {Glauber}(1963)}]{glauber1963quantum}%
  \BibitemOpen
  \bibfield  {author} {\bibinfo {author} {\bibfnamefont {R.~J.}\ \bibnamefont
  {Glauber}},\ }\href {\doibase 10.1103/PhysRev.130.2529} {\bibfield  {journal}
  {\bibinfo  {journal} {Phys. Rev.}\ }\textbf {\bibinfo {volume} {130}},\
  \bibinfo {pages} {2529} (\bibinfo {year} {1963})}\BibitemShut {NoStop}%
\bibitem [{\citenamefont {Nielsen}\ and\ \citenamefont
  {Chuang}(2010)}]{nielsen2010quantum}%
  \BibitemOpen
  \bibfield  {author} {\bibinfo {author} {\bibfnamefont {M.~A.}\ \bibnamefont
  {Nielsen}}\ and\ \bibinfo {author} {\bibfnamefont {I.~L.}\ \bibnamefont
  {Chuang}},\ }\href@noop {} {\emph {\bibinfo {title} {Quantum computation and
  quantum information}}}\ (\bibinfo  {publisher} {Cambridge university press},\
  \bibinfo {year} {2010})\BibitemShut {NoStop}%
\bibitem [{\citenamefont {Braunstein}\ \emph {et~al.}(1996)\citenamefont
  {Braunstein}, \citenamefont {Caves},\ and\ \citenamefont
  {Milburn}}]{BRAUNSTEIN1996135}%
  \BibitemOpen
  \bibfield  {author} {\bibinfo {author} {\bibfnamefont {S.~L.}\ \bibnamefont
  {Braunstein}}, \bibinfo {author} {\bibfnamefont {C.~M.}\ \bibnamefont
  {Caves}}, \ and\ \bibinfo {author} {\bibfnamefont {G.}~\bibnamefont
  {Milburn}},\ }\href {\doibase http://dx.doi.org/10.1006/aphy.1996.0040}
  {\bibfield  {journal} {\bibinfo  {journal} {Annals of Physics}\ }\textbf
  {\bibinfo {volume} {247}},\ \bibinfo {pages} {135 } (\bibinfo {year}
  {1996})}\BibitemShut {NoStop}%
\bibitem [{\citenamefont {Ma}\ \emph {et~al.}(2016{\natexlab{a}})\citenamefont
  {Ma}, \citenamefont {Yuan}, \citenamefont {Cao}, \citenamefont {Qi},\ and\
  \citenamefont {Zhang}}]{Ma2016QRNG}%
  \BibitemOpen
  \bibfield  {author} {\bibinfo {author} {\bibfnamefont {X.}~\bibnamefont
  {Ma}}, \bibinfo {author} {\bibfnamefont {X.}~\bibnamefont {Yuan}}, \bibinfo
  {author} {\bibfnamefont {Z.}~\bibnamefont {Cao}}, \bibinfo {author}
  {\bibfnamefont {B.}~\bibnamefont {Qi}}, \ and\ \bibinfo {author}
  {\bibfnamefont {Z.}~\bibnamefont {Zhang}},\ }\href@noop {} {\bibfield
  {journal} {\bibinfo  {journal} {npj Quantum Information}\ }\textbf {\bibinfo
  {volume} {2}},\ \bibinfo {pages} {16021} (\bibinfo {year}
  {2016}{\natexlab{a}})}\BibitemShut {NoStop}%
\bibitem [{\citenamefont {{\AA}berg}(2014)}]{aaberg2014catalytic}%
  \BibitemOpen
  \bibfield  {author} {\bibinfo {author} {\bibfnamefont {J.}~\bibnamefont
  {{\AA}berg}},\ }\href@noop {} {\bibfield  {journal} {\bibinfo  {journal}
  {Physical Review Letters}\ }\textbf {\bibinfo {volume} {113}},\ \bibinfo
  {pages} {150402} (\bibinfo {year} {2014})}\BibitemShut {NoStop}%
\bibitem [{\citenamefont {Lostaglio}\ \emph
  {et~al.}(2015{\natexlab{a}})\citenamefont {Lostaglio}, \citenamefont
  {Jennings},\ and\ \citenamefont {Rudolph}}]{lostaglio2015description}%
  \BibitemOpen
  \bibfield  {author} {\bibinfo {author} {\bibfnamefont {M.}~\bibnamefont
  {Lostaglio}}, \bibinfo {author} {\bibfnamefont {D.}~\bibnamefont {Jennings}},
  \ and\ \bibinfo {author} {\bibfnamefont {T.}~\bibnamefont {Rudolph}},\
  }\href@noop {} {\bibfield  {journal} {\bibinfo  {journal} {Nature
  communications}\ }\textbf {\bibinfo {volume} {6}} (\bibinfo {year}
  {2015}{\natexlab{a}})}\BibitemShut {NoStop}%
\bibitem [{\citenamefont {Lostaglio}\ \emph
  {et~al.}(2015{\natexlab{b}})\citenamefont {Lostaglio}, \citenamefont
  {Korzekwa}, \citenamefont {Jennings},\ and\ \citenamefont
  {Rudolph}}]{lostaglio2015quantum}%
  \BibitemOpen
  \bibfield  {author} {\bibinfo {author} {\bibfnamefont {M.}~\bibnamefont
  {Lostaglio}}, \bibinfo {author} {\bibfnamefont {K.}~\bibnamefont {Korzekwa}},
  \bibinfo {author} {\bibfnamefont {D.}~\bibnamefont {Jennings}}, \ and\
  \bibinfo {author} {\bibfnamefont {T.}~\bibnamefont {Rudolph}},\ }\href@noop
  {} {\bibfield  {journal} {\bibinfo  {journal} {Physical Review X}\ }\textbf
  {\bibinfo {volume} {5}},\ \bibinfo {pages} {021001} (\bibinfo {year}
  {2015}{\natexlab{b}})}\BibitemShut {NoStop}%
\bibitem [{\citenamefont {{\c{C}}akmak}\ \emph {et~al.}(2015)\citenamefont
  {{\c{C}}akmak}, \citenamefont {Karpat},\ and\ \citenamefont
  {Fanchini}}]{ccakmak2015factorization}%
  \BibitemOpen
  \bibfield  {author} {\bibinfo {author} {\bibfnamefont {B.}~\bibnamefont
  {{\c{C}}akmak}}, \bibinfo {author} {\bibfnamefont {G.}~\bibnamefont
  {Karpat}}, \ and\ \bibinfo {author} {\bibfnamefont {F.~F.}\ \bibnamefont
  {Fanchini}},\ }\href@noop {} {\bibfield  {journal} {\bibinfo  {journal}
  {Entropy}\ }\textbf {\bibinfo {volume} {17}},\ \bibinfo {pages} {790}
  (\bibinfo {year} {2015})}\BibitemShut {NoStop}%
\bibitem [{\citenamefont {Karpat}\ \emph {et~al.}(2014)\citenamefont {Karpat},
  \citenamefont {{\c{C}}akmak},\ and\ \citenamefont
  {Fanchini}}]{karpat2014quantum}%
  \BibitemOpen
  \bibfield  {author} {\bibinfo {author} {\bibfnamefont {G.}~\bibnamefont
  {Karpat}}, \bibinfo {author} {\bibfnamefont {B.}~\bibnamefont
  {{\c{C}}akmak}}, \ and\ \bibinfo {author} {\bibfnamefont {F.}~\bibnamefont
  {Fanchini}},\ }\href@noop {} {\bibfield  {journal} {\bibinfo  {journal}
  {Physical Review B}\ }\textbf {\bibinfo {volume} {90}},\ \bibinfo {pages}
  {104431} (\bibinfo {year} {2014})}\BibitemShut {NoStop}%
\bibitem [{\citenamefont {Baumgratz}\ \emph {et~al.}(2014)\citenamefont
  {Baumgratz}, \citenamefont {Cramer},\ and\ \citenamefont
  {Plenio}}]{Baumgratz2014Quantify}%
  \BibitemOpen
  \bibfield  {author} {\bibinfo {author} {\bibfnamefont {T.}~\bibnamefont
  {Baumgratz}}, \bibinfo {author} {\bibfnamefont {M.}~\bibnamefont {Cramer}}, \
  and\ \bibinfo {author} {\bibfnamefont {M.~B.}\ \bibnamefont {Plenio}},\
  }\href {\doibase 10.1103/PhysRevLett.113.140401} {\bibfield  {journal}
  {\bibinfo  {journal} {Phys. Rev. Lett.}\ }\textbf {\bibinfo {volume} {113}},\
  \bibinfo {pages} {140401} (\bibinfo {year} {2014})}\BibitemShut {NoStop}%
\bibitem [{\citenamefont {Yuan}\ \emph {et~al.}(2015)\citenamefont {Yuan},
  \citenamefont {Zhou}, \citenamefont {Cao},\ and\ \citenamefont
  {Ma}}]{yuan2015intrinsic}%
  \BibitemOpen
  \bibfield  {author} {\bibinfo {author} {\bibfnamefont {X.}~\bibnamefont
  {Yuan}}, \bibinfo {author} {\bibfnamefont {H.}~\bibnamefont {Zhou}}, \bibinfo
  {author} {\bibfnamefont {Z.}~\bibnamefont {Cao}}, \ and\ \bibinfo {author}
  {\bibfnamefont {X.}~\bibnamefont {Ma}},\ }\href@noop {} {\bibfield  {journal}
  {\bibinfo  {journal} {Physical Review A}\ }\textbf {\bibinfo {volume} {92}},\
  \bibinfo {pages} {022124} (\bibinfo {year} {2015})}\BibitemShut {NoStop}%
\bibitem [{\citenamefont {Winter}\ and\ \citenamefont
  {Yang}(2016)}]{winter2016operational}%
  \BibitemOpen
  \bibfield  {author} {\bibinfo {author} {\bibfnamefont {A.}~\bibnamefont
  {Winter}}\ and\ \bibinfo {author} {\bibfnamefont {D.}~\bibnamefont {Yang}},\
  }\href {\doibase 10.1103/PhysRevLett.116.120404} {\bibfield  {journal}
  {\bibinfo  {journal} {Phys. Rev. Lett.}\ }\textbf {\bibinfo {volume} {116}},\
  \bibinfo {pages} {120404} (\bibinfo {year} {2016})}\BibitemShut {NoStop}%
\bibitem [{\citenamefont {Chitambar}\ and\ \citenamefont
  {Hsieh}(2016)}]{Chitambar2016Relating}%
  \BibitemOpen
  \bibfield  {author} {\bibinfo {author} {\bibfnamefont {E.}~\bibnamefont
  {Chitambar}}\ and\ \bibinfo {author} {\bibfnamefont {M.-H.}\ \bibnamefont
  {Hsieh}},\ }\href {\doibase 10.1103/PhysRevLett.117.020402} {\bibfield
  {journal} {\bibinfo  {journal} {Phys. Rev. Lett.}\ }\textbf {\bibinfo
  {volume} {117}},\ \bibinfo {pages} {020402} (\bibinfo {year}
  {2016})}\BibitemShut {NoStop}%
\bibitem [{\citenamefont {Streltsov}\ \emph {et~al.}(2017)\citenamefont
  {Streltsov}, \citenamefont {Rana}, \citenamefont {Bera},\ and\ \citenamefont
  {Lewenstein}}]{streltsov2015hierarchies}%
  \BibitemOpen
  \bibfield  {author} {\bibinfo {author} {\bibfnamefont {A.}~\bibnamefont
  {Streltsov}}, \bibinfo {author} {\bibfnamefont {S.}~\bibnamefont {Rana}},
  \bibinfo {author} {\bibfnamefont {M.~N.}\ \bibnamefont {Bera}}, \ and\
  \bibinfo {author} {\bibfnamefont {M.}~\bibnamefont {Lewenstein}},\ }\href
  {\doibase 10.1103/PhysRevX.7.011024} {\bibfield  {journal} {\bibinfo
  {journal} {Phys. Rev. X}\ }\textbf {\bibinfo {volume} {7}},\ \bibinfo {pages}
  {011024} (\bibinfo {year} {2017})}\BibitemShut {NoStop}%
\bibitem [{\citenamefont {Ma}\ \emph {et~al.}(2016{\natexlab{b}})\citenamefont
  {Ma}, \citenamefont {Yadin}, \citenamefont {Girolami}, \citenamefont
  {Vedral},\ and\ \citenamefont {Gu}}]{ma2016converting}%
  \BibitemOpen
  \bibfield  {author} {\bibinfo {author} {\bibfnamefont {J.}~\bibnamefont
  {Ma}}, \bibinfo {author} {\bibfnamefont {B.}~\bibnamefont {Yadin}}, \bibinfo
  {author} {\bibfnamefont {D.}~\bibnamefont {Girolami}}, \bibinfo {author}
  {\bibfnamefont {V.}~\bibnamefont {Vedral}}, \ and\ \bibinfo {author}
  {\bibfnamefont {M.}~\bibnamefont {Gu}},\ }\href@noop {} {\bibfield  {journal}
  {\bibinfo  {journal} {Physical Review Letters}\ }\textbf {\bibinfo {volume}
  {116}},\ \bibinfo {pages} {160407} (\bibinfo {year}
  {2016}{\natexlab{b}})}\BibitemShut {NoStop}%
\bibitem [{\citenamefont {Streltsov}\ \emph {et~al.}(2015)\citenamefont
  {Streltsov}, \citenamefont {Singh}, \citenamefont {Dhar}, \citenamefont
  {Bera},\ and\ \citenamefont {Adesso}}]{streltsov2015measuring}%
  \BibitemOpen
  \bibfield  {author} {\bibinfo {author} {\bibfnamefont {A.}~\bibnamefont
  {Streltsov}}, \bibinfo {author} {\bibfnamefont {U.}~\bibnamefont {Singh}},
  \bibinfo {author} {\bibfnamefont {H.~S.}\ \bibnamefont {Dhar}}, \bibinfo
  {author} {\bibfnamefont {M.~N.}\ \bibnamefont {Bera}}, \ and\ \bibinfo
  {author} {\bibfnamefont {G.}~\bibnamefont {Adesso}},\ }\href@noop {}
  {\bibfield  {journal} {\bibinfo  {journal} {Physical Review Letters}\
  }\textbf {\bibinfo {volume} {115}},\ \bibinfo {pages} {020403} (\bibinfo
  {year} {2015})}\BibitemShut {NoStop}%
\bibitem [{\citenamefont {Chitambar}\ \emph {et~al.}(2016)\citenamefont
  {Chitambar}, \citenamefont {Streltsov}, \citenamefont {Rana}, \citenamefont
  {Bera}, \citenamefont {Adesso},\ and\ \citenamefont
  {Lewenstein}}]{chitambar2016assisted}%
  \BibitemOpen
  \bibfield  {author} {\bibinfo {author} {\bibfnamefont {E.}~\bibnamefont
  {Chitambar}}, \bibinfo {author} {\bibfnamefont {A.}~\bibnamefont
  {Streltsov}}, \bibinfo {author} {\bibfnamefont {S.}~\bibnamefont {Rana}},
  \bibinfo {author} {\bibfnamefont {M.}~\bibnamefont {Bera}}, \bibinfo {author}
  {\bibfnamefont {G.}~\bibnamefont {Adesso}}, \ and\ \bibinfo {author}
  {\bibfnamefont {M.}~\bibnamefont {Lewenstein}},\ }\href@noop {} {\bibfield
  {journal} {\bibinfo  {journal} {Physical Review Letters}\ }\textbf {\bibinfo
  {volume} {116}},\ \bibinfo {pages} {070402} (\bibinfo {year}
  {2016})}\BibitemShut {NoStop}%
\bibitem [{\citenamefont {Horodecki}\ \emph {et~al.}(2009)\citenamefont
  {Horodecki}, \citenamefont {Horodecki}, \citenamefont {Horodecki},\ and\
  \citenamefont {Horodecki}}]{horodecki2009quantum}%
  \BibitemOpen
  \bibfield  {author} {\bibinfo {author} {\bibfnamefont {R.}~\bibnamefont
  {Horodecki}}, \bibinfo {author} {\bibfnamefont {P.}~\bibnamefont
  {Horodecki}}, \bibinfo {author} {\bibfnamefont {M.}~\bibnamefont
  {Horodecki}}, \ and\ \bibinfo {author} {\bibfnamefont {K.}~\bibnamefont
  {Horodecki}},\ }\href {\doibase 10.1103/RevModPhys.81.865} {\bibfield
  {journal} {\bibinfo  {journal} {Rev. Mod. Phys.}\ }\textbf {\bibinfo {volume}
  {81}},\ \bibinfo {pages} {865} (\bibinfo {year} {2009})}\BibitemShut
  {NoStop}%
\bibitem [{\citenamefont {Bennett}\ \emph {et~al.}(1996)\citenamefont
  {Bennett}, \citenamefont {DiVincenzo}, \citenamefont {Smolin},\ and\
  \citenamefont {Wootters}}]{bennett1996mixed}%
  \BibitemOpen
  \bibfield  {author} {\bibinfo {author} {\bibfnamefont {C.~H.}\ \bibnamefont
  {Bennett}}, \bibinfo {author} {\bibfnamefont {D.~P.}\ \bibnamefont
  {DiVincenzo}}, \bibinfo {author} {\bibfnamefont {J.~A.}\ \bibnamefont
  {Smolin}}, \ and\ \bibinfo {author} {\bibfnamefont {W.~K.}\ \bibnamefont
  {Wootters}},\ }\href@noop {} {\bibfield  {journal} {\bibinfo  {journal}
  {Physical Review A}\ }\textbf {\bibinfo {volume} {54}},\ \bibinfo {pages}
  {3824} (\bibinfo {year} {1996})}\BibitemShut {NoStop}%
\bibitem [{\citenamefont {Uhlmann}(1998)}]{uhlmann1998entropy}%
  \BibitemOpen
  \bibfield  {author} {\bibinfo {author} {\bibfnamefont {A.}~\bibnamefont
  {Uhlmann}},\ }\href@noop {} {\bibfield  {journal} {\bibinfo  {journal} {Open
  Systems \& Information Dynamics}\ }\textbf {\bibinfo {volume} {5}},\ \bibinfo
  {pages} {209} (\bibinfo {year} {1998})}\BibitemShut {NoStop}%
\bibitem [{\citenamefont {Hill}\ and\ \citenamefont
  {Wootters}(1997)}]{hill1997entanglement}%
  \BibitemOpen
  \bibfield  {author} {\bibinfo {author} {\bibfnamefont {S.}~\bibnamefont
  {Hill}}\ and\ \bibinfo {author} {\bibfnamefont {W.~K.}\ \bibnamefont
  {Wootters}},\ }\href@noop {} {\bibfield  {journal} {\bibinfo  {journal}
  {Physical Review Letters}\ }\textbf {\bibinfo {volume} {78}},\ \bibinfo
  {pages} {5022} (\bibinfo {year} {1997})}\BibitemShut {NoStop}%
\bibitem [{\citenamefont {Wootters}(1998)}]{wootters1998entanglement}%
  \BibitemOpen
  \bibfield  {author} {\bibinfo {author} {\bibfnamefont {W.~K.}\ \bibnamefont
  {Wootters}},\ }\href@noop {} {\bibfield  {journal} {\bibinfo  {journal}
  {Physical Review Letters}\ }\textbf {\bibinfo {volume} {80}},\ \bibinfo
  {pages} {2245} (\bibinfo {year} {1998})}\BibitemShut {NoStop}%
\bibitem [{\citenamefont {Coffman}\ \emph {et~al.}(2000)\citenamefont
  {Coffman}, \citenamefont {Kundu},\ and\ \citenamefont
  {Wootters}}]{coffman2000distributed}%
  \BibitemOpen
  \bibfield  {author} {\bibinfo {author} {\bibfnamefont {V.}~\bibnamefont
  {Coffman}}, \bibinfo {author} {\bibfnamefont {J.}~\bibnamefont {Kundu}}, \
  and\ \bibinfo {author} {\bibfnamefont {W.~K.}\ \bibnamefont {Wootters}},\
  }\href@noop {} {\bibfield  {journal} {\bibinfo  {journal} {Physical Review
  A}\ }\textbf {\bibinfo {volume} {61}},\ \bibinfo {pages} {052306} (\bibinfo
  {year} {2000})}\BibitemShut {NoStop}%
\bibitem [{\citenamefont {Eltschka}\ and\ \citenamefont
  {Siewert}(2014)}]{eltschka2014quantifying}%
  \BibitemOpen
  \bibfield  {author} {\bibinfo {author} {\bibfnamefont {C.}~\bibnamefont
  {Eltschka}}\ and\ \bibinfo {author} {\bibfnamefont {J.}~\bibnamefont
  {Siewert}},\ }\href@noop {} {\bibfield  {journal} {\bibinfo  {journal}
  {Journal of Physics A: Mathematical and Theoretical}\ }\textbf {\bibinfo
  {volume} {47}},\ \bibinfo {pages} {424005} (\bibinfo {year}
  {2014})}\BibitemShut {NoStop}%
\bibitem [{\citenamefont {D{\"u}r}\ \emph {et~al.}(2000)\citenamefont
  {D{\"u}r}, \citenamefont {Vidal},\ and\ \citenamefont
  {Cirac}}]{dur2000three}%
  \BibitemOpen
  \bibfield  {author} {\bibinfo {author} {\bibfnamefont {W.}~\bibnamefont
  {D{\"u}r}}, \bibinfo {author} {\bibfnamefont {G.}~\bibnamefont {Vidal}}, \
  and\ \bibinfo {author} {\bibfnamefont {J.~I.}\ \bibnamefont {Cirac}},\
  }\href@noop {} {\bibfield  {journal} {\bibinfo  {journal} {Physical Review
  A}\ }\textbf {\bibinfo {volume} {62}},\ \bibinfo {pages} {062314} (\bibinfo
  {year} {2000})}\BibitemShut {NoStop}%
\bibitem [{\citenamefont {Osterloh}\ and\ \citenamefont
  {Siewert}(2005)}]{osterloh2005constructing}%
  \BibitemOpen
  \bibfield  {author} {\bibinfo {author} {\bibfnamefont {A.}~\bibnamefont
  {Osterloh}}\ and\ \bibinfo {author} {\bibfnamefont {J.}~\bibnamefont
  {Siewert}},\ }\href@noop {} {\bibfield  {journal} {\bibinfo  {journal}
  {Physical Review A}\ }\textbf {\bibinfo {volume} {72}},\ \bibinfo {pages}
  {012337} (\bibinfo {year} {2005})}\BibitemShut {NoStop}%
\bibitem [{\citenamefont {Djokovic}\ and\ \citenamefont
  {Osterloh}(2009)}]{d2008polynomial}%
  \BibitemOpen
  \bibfield  {author} {\bibinfo {author} {\bibfnamefont {D.~Z.}\ \bibnamefont
  {Djokovic}}\ and\ \bibinfo {author} {\bibfnamefont {A.}~\bibnamefont
  {Osterloh}},\ }\href {\doibase 10.1063/1.3075830} {\bibfield  {journal}
  {\bibinfo  {journal} {Journal of Mathematical Physics}\ }\textbf {\bibinfo
  {volume} {50}},\ \bibinfo {pages} {033509} (\bibinfo {year} {2009})},\
  \Eprint {http://arxiv.org/abs/http://dx.doi.org/10.1063/1.3075830}
  {http://dx.doi.org/10.1063/1.3075830} \BibitemShut {NoStop}%
\bibitem [{\citenamefont {Gour}\ and\ \citenamefont
  {Wallach}(2013)}]{gour2013classification}%
  \BibitemOpen
  \bibfield  {author} {\bibinfo {author} {\bibfnamefont {G.}~\bibnamefont
  {Gour}}\ and\ \bibinfo {author} {\bibfnamefont {N.~R.}\ \bibnamefont
  {Wallach}},\ }\href@noop {} {\bibfield  {journal} {\bibinfo  {journal}
  {Physical Review Letters}\ }\textbf {\bibinfo {volume} {111}},\ \bibinfo
  {pages} {060502} (\bibinfo {year} {2013})}\BibitemShut {NoStop}%
\bibitem [{\citenamefont {Lohmayer}\ \emph {et~al.}(2006)\citenamefont
  {Lohmayer}, \citenamefont {Osterloh}, \citenamefont {Siewert},\ and\
  \citenamefont {Uhlmann}}]{lohmayer2006entangled}%
  \BibitemOpen
  \bibfield  {author} {\bibinfo {author} {\bibfnamefont {R.}~\bibnamefont
  {Lohmayer}}, \bibinfo {author} {\bibfnamefont {A.}~\bibnamefont {Osterloh}},
  \bibinfo {author} {\bibfnamefont {J.}~\bibnamefont {Siewert}}, \ and\
  \bibinfo {author} {\bibfnamefont {A.}~\bibnamefont {Uhlmann}},\ }\href@noop
  {} {\bibfield  {journal} {\bibinfo  {journal} {Physical Review Letters}\
  }\textbf {\bibinfo {volume} {97}},\ \bibinfo {pages} {260502} (\bibinfo
  {year} {2006})}\BibitemShut {NoStop}%
\bibitem [{\citenamefont {Jung}\ \emph {et~al.}(2009)\citenamefont {Jung},
  \citenamefont {Hwang}, \citenamefont {Park},\ and\ \citenamefont
  {Son}}]{jung2009three}%
  \BibitemOpen
  \bibfield  {author} {\bibinfo {author} {\bibfnamefont {E.}~\bibnamefont
  {Jung}}, \bibinfo {author} {\bibfnamefont {M.-R.}\ \bibnamefont {Hwang}},
  \bibinfo {author} {\bibfnamefont {D.}~\bibnamefont {Park}}, \ and\ \bibinfo
  {author} {\bibfnamefont {J.-W.}\ \bibnamefont {Son}},\ }\href@noop {}
  {\bibfield  {journal} {\bibinfo  {journal} {Physical Review A}\ }\textbf
  {\bibinfo {volume} {79}},\ \bibinfo {pages} {024306} (\bibinfo {year}
  {2009})}\BibitemShut {NoStop}%
\bibitem [{\citenamefont {Siewert}\ and\ \citenamefont
  {Eltschka}(2012)}]{siewert2012quantifying}%
  \BibitemOpen
  \bibfield  {author} {\bibinfo {author} {\bibfnamefont {J.}~\bibnamefont
  {Siewert}}\ and\ \bibinfo {author} {\bibfnamefont {C.}~\bibnamefont
  {Eltschka}},\ }\href@noop {} {\bibfield  {journal} {\bibinfo  {journal}
  {Physical Review Letters}\ }\textbf {\bibinfo {volume} {108}},\ \bibinfo
  {pages} {230502} (\bibinfo {year} {2012})}\BibitemShut {NoStop}%
\bibitem [{\citenamefont {Regula}\ and\ \citenamefont
  {Adesso}(2016)}]{regula2016entanglement}%
  \BibitemOpen
  \bibfield  {author} {\bibinfo {author} {\bibfnamefont {B.}~\bibnamefont
  {Regula}}\ and\ \bibinfo {author} {\bibfnamefont {G.}~\bibnamefont
  {Adesso}},\ }\href@noop {} {\bibfield  {journal} {\bibinfo  {journal}
  {Physical Review Letters}\ }\textbf {\bibinfo {volume} {116}},\ \bibinfo
  {pages} {070504} (\bibinfo {year} {2016})}\BibitemShut {NoStop}%
\bibitem [{\citenamefont {Horodecki}(1997)}]{HORODECKI1997333}%
  \BibitemOpen
  \bibfield  {author} {\bibinfo {author} {\bibfnamefont {P.}~\bibnamefont
  {Horodecki}},\ }\href {\doibase
  http://dx.doi.org/10.1016/S0375-9601(97)00416-7} {\bibfield  {journal}
  {\bibinfo  {journal} {Physics Letters A}\ }\textbf {\bibinfo {volume}
  {232}},\ \bibinfo {pages} {333 } (\bibinfo {year} {1997})}\BibitemShut
  {NoStop}%
\bibitem [{\citenamefont {Gour}(2005)}]{gour2005family}%
  \BibitemOpen
  \bibfield  {author} {\bibinfo {author} {\bibfnamefont {G.}~\bibnamefont
  {Gour}},\ }\href@noop {} {\bibfield  {journal} {\bibinfo  {journal} {Physical
  Review A}\ }\textbf {\bibinfo {volume} {71}},\ \bibinfo {pages} {012318}
  (\bibinfo {year} {2005})}\BibitemShut {NoStop}%
\bibitem [{\citenamefont {Boyd}\ and\ \citenamefont
  {Vandenberghe}(2004)}]{boyd2004convex}%
  \BibitemOpen
  \bibfield  {author} {\bibinfo {author} {\bibfnamefont {S.}~\bibnamefont
  {Boyd}}\ and\ \bibinfo {author} {\bibfnamefont {L.}~\bibnamefont
  {Vandenberghe}},\ }\href@noop {} {\emph {\bibinfo {title} {Convex
  optimization}}}\ (\bibinfo  {publisher} {Cambridge university press},\
  \bibinfo {year} {2004})\BibitemShut {NoStop}%
\bibitem [{\citenamefont {Du}\ \emph {et~al.}(2015{\natexlab{a}})\citenamefont
  {Du}, \citenamefont {Bai},\ and\ \citenamefont {Qi}}]{Du2015CoherenceM}%
  \BibitemOpen
  \bibfield  {author} {\bibinfo {author} {\bibfnamefont {S.}~\bibnamefont
  {Du}}, \bibinfo {author} {\bibfnamefont {Z.}~\bibnamefont {Bai}}, \ and\
  \bibinfo {author} {\bibfnamefont {X.}~\bibnamefont {Qi}},\ }\href
  {http://dl.acm.org/citation.cfm?id=2871378.2871381} {\bibfield  {journal}
  {\bibinfo  {journal} {Quantum Info. Comput.}\ }\textbf {\bibinfo {volume}
  {15}},\ \bibinfo {pages} {1307} (\bibinfo {year}
  {2015}{\natexlab{a}})}\BibitemShut {NoStop}%
\bibitem [{\citenamefont {Terhal}\ and\ \citenamefont
  {Vollbrecht}(2000)}]{terhal2000entanglement}%
  \BibitemOpen
  \bibfield  {author} {\bibinfo {author} {\bibfnamefont {B.~M.}\ \bibnamefont
  {Terhal}}\ and\ \bibinfo {author} {\bibfnamefont {K.~G.~H.}\ \bibnamefont
  {Vollbrecht}},\ }\href@noop {} {\bibfield  {journal} {\bibinfo  {journal}
  {Physical Review Letters}\ }\textbf {\bibinfo {volume} {85}},\ \bibinfo
  {pages} {2625} (\bibinfo {year} {2000})}\BibitemShut {NoStop}%
\bibitem [{\citenamefont {Vollbrecht}\ and\ \citenamefont
  {Werner}(2001)}]{vollbrecht2001entanglement}%
  \BibitemOpen
  \bibfield  {author} {\bibinfo {author} {\bibfnamefont {K.~G.~H.}\
  \bibnamefont {Vollbrecht}}\ and\ \bibinfo {author} {\bibfnamefont {R.~F.}\
  \bibnamefont {Werner}},\ }\href@noop {} {\bibfield  {journal} {\bibinfo
  {journal} {Physical Review A}\ }\textbf {\bibinfo {volume} {64}},\ \bibinfo
  {pages} {062307} (\bibinfo {year} {2001})}\BibitemShut {NoStop}%
\bibitem [{\citenamefont {Sent{\'\i}s}\ \emph {et~al.}(2016)\citenamefont
  {Sent{\'\i}s}, \citenamefont {Eltschka}, \citenamefont {G{\"u}hne},
  \citenamefont {Huber},\ and\ \citenamefont
  {Siewert}}]{sentis2016quantifying}%
  \BibitemOpen
  \bibfield  {author} {\bibinfo {author} {\bibfnamefont {G.}~\bibnamefont
  {Sent{\'\i}s}}, \bibinfo {author} {\bibfnamefont {C.}~\bibnamefont
  {Eltschka}}, \bibinfo {author} {\bibfnamefont {O.}~\bibnamefont {G{\"u}hne}},
  \bibinfo {author} {\bibfnamefont {M.}~\bibnamefont {Huber}}, \ and\ \bibinfo
  {author} {\bibfnamefont {J.}~\bibnamefont {Siewert}},\ }\href@noop {}
  {\bibfield  {journal} {\bibinfo  {journal} {Physical Review Letters}\
  }\textbf {\bibinfo {volume} {117}},\ \bibinfo {pages} {190502} (\bibinfo
  {year} {2016})}\BibitemShut {NoStop}%
\bibitem [{\citenamefont {Verstraete}\ \emph {et~al.}(2003)\citenamefont
  {Verstraete}, \citenamefont {Dehaene},\ and\ \citenamefont
  {De~Moor}}]{verstraete2003normal}%
  \BibitemOpen
  \bibfield  {author} {\bibinfo {author} {\bibfnamefont {F.}~\bibnamefont
  {Verstraete}}, \bibinfo {author} {\bibfnamefont {J.}~\bibnamefont {Dehaene}},
  \ and\ \bibinfo {author} {\bibfnamefont {B.}~\bibnamefont {De~Moor}},\
  }\href@noop {} {\bibfield  {journal} {\bibinfo  {journal} {Physical Review
  A}\ }\textbf {\bibinfo {volume} {68}},\ \bibinfo {pages} {012103} (\bibinfo
  {year} {2003})}\BibitemShut {NoStop}%
\bibitem [{\citenamefont {Eltschka}\ \emph {et~al.}(2012)\citenamefont
  {Eltschka}, \citenamefont {Bastin}, \citenamefont {Osterloh},\ and\
  \citenamefont {Siewert}}]{eltschka2012multipartite}%
  \BibitemOpen
  \bibfield  {author} {\bibinfo {author} {\bibfnamefont {C.}~\bibnamefont
  {Eltschka}}, \bibinfo {author} {\bibfnamefont {T.}~\bibnamefont {Bastin}},
  \bibinfo {author} {\bibfnamefont {A.}~\bibnamefont {Osterloh}}, \ and\
  \bibinfo {author} {\bibfnamefont {J.}~\bibnamefont {Siewert}},\ }\href@noop
  {} {\bibfield  {journal} {\bibinfo  {journal} {Physical Review A}\ }\textbf
  {\bibinfo {volume} {85}},\ \bibinfo {pages} {022301} (\bibinfo {year}
  {2012})}\BibitemShut {NoStop}%
\bibitem [{\citenamefont {Chin}(2017)}]{chin2017generalized}%
  \BibitemOpen
  \bibfield  {author} {\bibinfo {author} {\bibfnamefont {S.}~\bibnamefont
  {Chin}},\ }\href@noop {} {\bibfield  {journal} {\bibinfo  {journal} {arXiv
  preprint arXiv:1702.06061}\ } (\bibinfo {year} {2017})}\BibitemShut {NoStop}%
\bibitem [{\citenamefont {Du}\ \emph {et~al.}(2015{\natexlab{b}})\citenamefont
  {Du}, \citenamefont {Bai},\ and\ \citenamefont {Guo}}]{du2015conditions}%
  \BibitemOpen
  \bibfield  {author} {\bibinfo {author} {\bibfnamefont {S.}~\bibnamefont
  {Du}}, \bibinfo {author} {\bibfnamefont {Z.}~\bibnamefont {Bai}}, \ and\
  \bibinfo {author} {\bibfnamefont {Y.}~\bibnamefont {Guo}},\ }\href@noop {}
  {\bibfield  {journal} {\bibinfo  {journal} {Physical Review A}\ }\textbf
  {\bibinfo {volume} {91}},\ \bibinfo {pages} {052120} (\bibinfo {year}
  {2015}{\natexlab{b}})}\BibitemShut {NoStop}%
\end{thebibliography}%

\end{document}